\newcommand{\neuralUQ}{\cite{neural-kalman-anonymous}\xspace}
\newtheorem{problem}{Problem}
\DeclareMathOperator{\trace}{\operatorname{tr}}
\DeclareMathOperator{\expect}{\mathbb{E}}
\DeclareMathOperator{\probability}{\mathbb{P}}
\DeclareMathOperator{\Cov}{\operatorname{Cov}}
\DeclareMathOperator{\normal}{\mathrm N}
\DeclareMathOperator{\Normal}{\mathrm N^*}
\title[Filtering and smoothing neural networks]{Assumed Density Filtering and Smoothing with Neural Network Surrogate Models}
\renewcommand{\cbr}[1]{\ensuremath{\left\{#1\right\}}}
\begin{document}

\maketitle

\begin{abstract}%
 The Kalman filter and Rauch-Tung-Striebel (RTS) smoother are optimal for state estimation in linear dynamic systems.
With nonlinear systems, the challenge consists in how to propagate uncertainty through the state transitions and output function.
For the case of a neural network model, we enable accurate uncertainty propagation using a recent state-of-the-art analytic formula for computing the mean and covariance of a deep neural network with Gaussian input.
We argue that cross entropy is a more appropriate performance metric than RMSE for evaluating the accuracy of filters and smoothers.
We demonstrate the superiority of our method for state estimation on a stochastic Lorenz system and a Wiener system, and find that our method enables more optimal linear quadratic regulation when the state estimate is used for feedback.
Code available at \url{https://github.com/simontheflutist/analytic-moments}.
\end{abstract}

\begin{keywords}%
  filtering, smoothing, neural networks, uncertainty propagation
\end{keywords}

\section{Introduction}
Neural networks can model dynamic systems by virtue of their universal approximator property \citep{narendra_neural_1992,masri_identification_1993}.
They can be supervised to fit trajectory data \citep{pillonetto_deep_2025}
or ``physics-informed'' to satisfy known equations of motion \citep{mohajerin_multistep_2019,michalowska_neural_2024}.
Beyond function approximation, there is deeper interest in analyzing what neural networks have to offer to the control theorist and vice versa, by finding analysis pathways in the compositional structure of an artificial neural network \citep{junnarkar_synthesizing_2024,xu_eclipse_2024}.


This paper deals with the problem of filtering---state estimation from past and present output measurements.
Classic applications include tracking and predicting the ballistic motion of a point mass on the basis of noisy measurements such as radar,
 or estimating one's own location by fusing 
inertial sensors and satellite navigation.
Sundry applications include human physiology forecasting \citep{albers_interpretable_2023}, pandemic surveillance \citep{alsaggaf_nonlinear_2024}, and latent macroeconomic variables \citep{burmeister_kalman_1982}.
A restricted version of the filtering problem was solved in the '60s for linear systems with Gaussian process and measurement noise \citep{kalman_new_1961,luenberger_observing_1964}.
For nonlinear systems, the theoretical Bayesian filter is intractable.
The playing field for how to generalize the Kalman filter is wide and highly contested, and there are legion variational approximations such as the Extended, and Unscented Kalman filters which claim to handle nonlinearity more faithfully \citep{sarkka_bayesian_2023,jiang_new_2025}.
This paper's scope is limited to filters whose only state consists of a mean and covariance (thus excluding particle filters, ensemble Kalman filters, and PDE-based methods).


Existing work on state estimation with neural network dynamic models has treated the network as a black box, embedding it inside a general-purpose nonlinear Kalman filter such as the Extended Kalman Filter \citep{oveissi_novel_2025} or the Unscented Kalman Filter \citep{anurag_rcukf_2025}.
(Also see \citet{bai_state_2023} for an ample bibliography.)
The basic challenge to these methods is propagating a Gaussian uncertainty through the dynamic model, as these methods are all perturbative in the small-variance limit.



Our work, which constructs a bespoke analytic Kalman filter for neural network dynamics, begins by recognizing that a neural network is not just any smooth black-box function.
All of the nonlinearity of a neural network is concentrated in its activation function, which can be chosen to facilitate closed-form moment propagation \citep{neural-kalman-anonymous}.
With the right activation function, the first two moments of a Normal distribution can be propagated exactly through a single layer of a feedforward neural network, and by extension, approximately through a deep feedforward neural network.
Across a range of applications, this uncertainty propagation method dramatically outperforms linearized and unscented propagation.
The importance of moment accuracy is highlighted by \citet{deisenroth_analytic_2009}, which finds that analytic moment propagation improves Kalman filtering on models represented as Gaussian processes.
Our work turns to models represented as neural networks.


Recursive Bayesian state estimation can be implemented using three fundamental operations on functions:
coupling (forming the joint distribution of two functions applied to the same input), conditioning, and uncertainty propagation.
This paper implements the coupling operation using the grammar of neural networks.
It imposes a variational approximation to simplify conditioning using the Gaussian formula.
It tests different options for the plug-in uncertainty propagation method, including linearized propagation, unscented propagation, and analytic propagation (introduced in \neuralUQ).


We find that analytic moment propagation through neural networks enables state estimation in problems hitherto believed to be intractable.
The literature exhibits a survivorship bias for nonlinear filtering and smoothing problems that are amenable to general-purpose nonlinear Kalman filters (as detailed in the Background section).
Usually, this amounts to selecting a sampling time that is small enough that the discretized dynamics are amenable to perturbative uncertainty propagation methods such as linearization.
For example, the Lorenz system is filtered using a discretization time of 0.001 \cite{nosrati_chaotic_2011}, 0.005 \cite{dubois_data-driven_2020}, or 0.01 \cite{oveissi_novel_2025}.
Our work extends the discretization time to 1, which is on the order of a full period of the chaotic attractor.


We do not impeach the Extended, Unscented etc.~Kalman filters as unfit for the problems on which they have been applied.
Rather, by applying the analytic moment propagation method of \neuralUQ to a neural network representation of the dynamics, we unlock Kalman filtering on harder, more nonlinear problems.
On highly nonlinear problems, our method improves upon existing Kalman filters in both accuracy and calibration.
We show an improvement in the point prediction and estimate of the state (accuracy). Qualitatively, we observe that our estimator is able to stay synchronized with the chaotic Lorenz system.
We demonstrate that confidence regions predicted by our method's state covariances have better coverage than existing Kalman filters (calibration).
    That is, 95\% confidence regions for the state contain the true state roughly 95\% of the time.
    This enables risk-aware decision making for optimally trading between safety and performance.

\section{Notation}
When \(\sigma:\mathbb R \to \mathbb R\) is a neural network activation function, \(\sigma (x)\) for \(x \in \mathbb{R}^n\) is applied elementwise.
If \(X\) is a square-integrable random vector, the notation \(\normal X\) refers to a random variable having distribution \(\mathcal N(\expect X, \Cov X)\).
If \(f\) is a neural network, the notation \(\Normal f(X)\) is a Normal random variable in which the normality approximation is applied layer-by-layer as in the analytic propagation method of \neuralUQ.
The comma \(,\) is a higher-order function:
if \(x \mapsto f(x)\) and \(x \mapsto g(x)\) are functions, then \((f
, g)\) is the function \(x \mapsto (f(x), g(x))\).


\section{Problem statement}
A dynamic system is described by
\begin{subequations}
  \label{eq:dynamic-system}
  \begin{align}
    x_0 &\sim \mathcal{N}(\mu_0, \Sigma_0) \\
    x_t &= F(x_{t-1}, u_t) + \eta_t, &\eta_t &\sim \mathcal{N}(0, Q)
    & \forall t &\in \cbr{1 \ldots T}\\
    y_t &= H(x_t, u_t) + \epsilon_t, &\epsilon_t &\sim \mathcal{N}(0,
    R) & \forall t &\in \cbr{1 \ldots T}
  \end{align}
\end{subequations}
where \(x_t \in \mathbb{R}^{n_x}\) is the state, \(u_t \in \mathbb
R^{n_u}\) is the input, and \(y_t \in \mathbb{R}^{n_y}\) is the output.
The random variables \(x_0\), \(\{\eta_t\}_{t=1}^T\), and
\(\{\epsilon_t\}_{t=1}^T\) are independent.

This model motivates three problems.
\begin{problem}[Prediction]\label{problem:prediction}
  Given \(\{u_s\}_{s=0}^{t-1}\) and \(\{y_s\}_{s=1}^{t-1}\), predict
  \(x_t\) and \(y_t\) as a joint distribution \((\hat X_{t \mid t-1},
  \hat Y_{t \mid t-1})\).
\end{problem}

\begin{problem}[Filtering]\label{problem:filtering}
  Given \(\{u_s\}_{s=0}^{t}\) and \(\{y_s\}_{s=1}^t\), estimate
  \(\hat X_{t \mid t}\).
\end{problem}

\begin{problem}[Smoothing]\label{problem:smoothing}
  Given \(\{u_s\}_{s=0}^T\) and \(\{y_s\}_{s=1}^T\), estimate \(\hat X_{t \mid T}\).
\end{problem}

In the case that \(F\) and \(H\) are linear functions,
the predictive and posterior distributions arising in these problems are Normal and can be computed analytically by recursion across time steps.
However, in our case, \(F\) and \(H\) are nonlinear functions.
Pursuant to the Assumed Density Filtering Ansatz \citep{deisenroth_analytic_2009}, 
we impose Normality assumptions on nonlinearly transformed Normal variables.

\paragraph{Abstract Kalman filtering}
The Kalman filter (\algorithmref{alg:kalman-filter}) solves the prediction
and filtering problems by a forward recursion.
In \texttt{Predict}, it uses the estimate of the current state \(X\) to compute the jointly Gaussian distribution \((X', Y')\) of the next state and next output.\footnote{
  Frequently (e.g.~in \citealt{jiang_new_2025}) this step is notated in math and implemented in code as three separate uncertainty propagations:
  first, the uncertainty of \(X\) is propagated through the dynamic model to get \(X'\);
  second, the uncertainty of \(X'\) is propagated through the observation model to get \(Y'\);
  third, the uncertainty of \(X'\) is propagated through the observation model again to compute the cross-covariance of \(X'\) and \(Y'\).
  Our method combines the latter two steps by using a neural network to express the coupling of \(X'\) and \(Y'\).
  This enables the analytic implementation of the neural network uncertainty propagation operator developed in \neuralUQ.
}
In \texttt{Update}, it incorporates the measurement \(y\) by conditioning the joint predictive distribution \((X', Y')\) on \(Y = y\).
We also benchmark against an optional refinement \texttt{Recal} developed in \cite{jiang_new_2025,jiang_mitigating_2026}, listed in Algorithm~\ref{alg:kalman-filter-recal}, which recalibrates the corrected state covariance using the posterior mean,
in cases in which \(H\) is a nonlinear function.

\paragraph{Abstract RTS smoothing}
The Rauch-Tung-Striebel smoother \algorithmref{alg:rts-smoother} solves the
smoothing problem by a backward recursion.
\algorithmref{alg:rts-smoother} is also implemented as \texttt{Predict} and \texttt{Update}.
In \texttt{Predict}, it uses the estimate of the current state \(X\) to compute the jointly Gaussian distribution \((X, X')\) of the current and next state.
In \texttt{Update}, it uses the estimate of the next state \(X'\) to update the estimate of the current state \(X\).

In Appendix \ref{sec:algorithms}, we list the pseudocode for these algorithms using high-level probabilistic notation similar to  \citet[Algorithms~5.3--4, 38.1--2]{hennig_probabilistic_2022}, in which the principal objects are probability distributions represented by capital letters.

\section{Neural networks}
We introduce a preferred formalism for neural networks.
The four-parameter \((A, b, C, d)\) layer function is more general than customary.
This generality allows for certain possibilities such as residual networks (\(C=I\), \(d=0\)), as well as computing certain joint distributions of crucial importance in Kalman filtering and RTS smoothing.


\begin{definition}
    \label{def:layer-function}
    A layer function is a function \(g:\mathbb R^n \times \mathbb R^{m \times n} \times \mathbb R^m \times \mathbb R^{m \times n} \times \mathbb R^m \to \mathbb R^m\) defined by \(g(x; A, b, C, d) = \sigma(A x + b) + C x + d\), where \(A \in \mathbb R^{m \times n}, b \in \mathbb R^m, C \in \mathbb R^{m \times n}, d \in \mathbb R^m\) are parameters.
\end{definition}

\begin{definition}
    \label{def:neural-network}
    A neural network with \(\ell \) layers is the function \(f: \mathbb R^{n_x} \to \mathbb R^{n_y}\) defined by
    \begin{align*}
        f(x) &= f^\ell(x) \\
        f^k(x) &= g(f^{k-1}(x); A^k, b^k, C^k, d^k) & \forall k &\in \cbr{1 \ldots \ell} \\
        f^0(x) &= x
    \end{align*}
\end{definition}

\subsection{The identity-augmentation operator}
\label{sec:identity-augmentation}
\algorithmref{alg:kalman-filter} requires the joint distribution of \((X_t, Y_t)\),
and \algorithmref{alg:rts-smoother} requires the joint distribution of \((X_t, X_{t+1})\).
Both of these distributions can be computed using the identity-augmentation operator \(F \mapsto F_\text{aug}=(\text{id}, F)\).
We construct a representation of \(F_\text{aug}\) that is itself a neural network.
This construction is not inherently profound, but becomes extremely useful (when paired with a general-purpose Normal approximation of a neural network's output) as a mathematical and programmatic way of computing the joint distribution of a neural network's input and its output.

\begin{lemma}
  \label{lem:augmentation}
  Neural networks defined by Def.~\ref{def:neural-network} are closed under the input coupling:
  if \(f_1\) and \(f_2\) are two neural networks with \(n\) inputs and \(\ell\) layers, then \((f_1, f_2)\) can also be parameterized by a neural network with \(n\) inputs and \(\ell\) layers.
\end{lemma}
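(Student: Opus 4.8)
The plan is to prove closure by exhibiting an explicit parameterization of \((f_1, f_2)\) and verifying it by induction on the layer index. Since both \(f_1\) and \(f_2\) consume the \emph{same} input \(x \in \mathbb R^n\), the combined network cannot be a single block-diagonal composition throughout: the very first layer receives a shared input, whereas every deeper layer receives the concatenation of the two hidden states. So I would treat layer~1 separately from layers \(2,\ldots,\ell\). Let \(f_1\) have layer widths \(n = m^0, m_1^1, \ldots, m_1^\ell\) and \(f_2\) have widths \(n = m^0, m_2^1, \ldots, m_2^\ell\), with parameters \((A_i^k, b_i^k, C_i^k, d_i^k)\) for \(i \in \{1,2\}\). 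I will define a network \(f\) whose \(k\)-th layer has width \(m_1^k + m_2^k\) and show its parameters can be chosen so that \(f^k(x) = (f_1^k(x), f_2^k(x))\) for every \(k\).

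For the first layer, because the input is shared, I stack the parameter matrices vertically:
\[
A^1 = \begin{pmatrix} A_1^1 \\ A_2^1 \end{pmatrix}, \quad C^1 = \begin{pmatrix} C_1^1 \\ C_2^1 \end{pmatrix}, \quad b^1 = \begin{pmatrix} b_1^1 \\ b_2^1 \end{pmatrix}, \quad d^1 = \begin{pmatrix} d_1^1 \\ d_2^1 \end{pmatrix}.
\]
For each \(k \in \{2, \ldots, \ell\}\) the layer input is already the stacked vector, so I place the parameters block-diagonally:
\[
A^k = \begin{pmatrix} A_1^k & 0 \\ 0 & A_2^k \end{pmatrix}, \quad C^k = \begin{pmatrix} C_1^k & 0 \\ 0 & C_2^k \end{pmatrix}, \quad b^k = \begin{pmatrix} b_1^k \\ b_2^k \end{pmatrix}, \quad d^k = \begin{pmatrix} d_1^k \\ d_2^k \end{pmatrix}.
\]

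The verification rests on one observation: because \(\sigma\) acts elementwise (as fixed in the Notation section), it respects the block partition. With the off-diagonal blocks of \(A^k\) and \(C^k\) set to zero, the pre-activation \(A^k z + b^k\) of a stacked input \(z = (z_1, z_2)\) separates as \((A_1^k z_1 + b_1^k,\, A_2^k z_2 + b_2^k)\), and applying \(\sigma\) componentwise preserves that separation; the skip term \(C^k z + d^k\) separates identically. Feeding the inductive hypothesis \(f^{k-1}(x) = (f_1^{k-1}(x), f_2^{k-1}(x))\) through the layer function of Def.~\ref{def:layer-function} then gives \(f^k(x) = (f_1^k(x), f_2^k(x))\), while the base case \(k=1\) follows directly from the vertically stacked parameters acting on the shared \(x\). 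Setting \(k = \ell\) yields \(f = (f_1, f_2)\), and by construction \(f\) has \(n\) inputs and \(\ell\) layers.

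I do not expect a genuine obstacle here, since the statement is structural rather than analytic. The only point demanding care is the asymmetry already flagged: vertical stacking at layer~1 (shared input) versus block-diagonal stacking at layers \(2,\ldots,\ell\) (concatenated hidden states), together with the fact that the elementwise activation is precisely what lets the block-diagonal weights decouple the two sub-networks without cross-contamination.
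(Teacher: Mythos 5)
Your construction is exactly the one the paper uses in Appendix~\ref{app:proof-augmentation}: vertical stacking of the parameter blocks in layer~1 (where the input is shared) and block-diagonal direct sums in layers $2,\ldots,\ell$. The inductive verification you add, relying on the elementwise action of $\sigma$ to keep the blocks decoupled, is correct and in fact supplies detail the paper leaves implicit.
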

The idea of the construction, detailed in \appendixref{app:proof-augmentation}, is to build block-identity \(C\) matrices.
The first layer repeats the input by \(x\mapsto(x,x)\), and the rest of the network is a direct sum of the layers of \(f_1\) and \(f_2\).
\begin{corollary}
  If \(f\) is a neural network with \(n\) inputs and \(\ell\) layers,
  then \((\operatorname{id}, f)\) can be represented by a neural network with \(n\)
  inputs and \(\ell\) layers.
\end{corollary}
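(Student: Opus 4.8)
The plan is to obtain the corollary as a direct specialization of Lemma~\ref{lem:augmentation}, taking $f_1=\operatorname{id}$ and $f_2=f$. The coupling $(f_1,f_2)$ is then exactly $(\operatorname{id},f)$, so the only thing that needs checking is that the identity map $\operatorname{id}:\mathbb R^n\to\mathbb R^n$ genuinely qualifies as a neural network with $n$ inputs and $\ell$ layers in the sense of Def.~\ref{def:neural-network}, so that both arguments of the coupling have the common depth $\ell$ that the lemma requires.

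First I would exhibit a single \emph{identity layer} using the four-parameter layer function of Def.~\ref{def:layer-function}. Setting $A=0$, $b=0$, $C=I$, and $d=-\sigma(0)\mathbf 1$ gives
\[
  g(x;0,0,I,-\sigma(0)\mathbf 1)=\sigma(0)\mathbf 1+Ix-\sigma(0)\mathbf 1=x,
\]
because the affine preactivation collapses to the constant $\sigma(0)$ in each coordinate while the skip term $Cx$ carries the input through unchanged. The offset $d$ is precisely what cancels the nuisance constant $\sigma(0)\mathbf 1$; this is the one place where the extra generality of the $(A,b,C,d)$ layer is used, and it is also why the claim would be immediate for activations with $\sigma(0)=0$ but continues to hold in general. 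Composing $\ell$ such layers produces a depth-$\ell$ network computing $\operatorname{id}$.

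Finally I would invoke Lemma~\ref{lem:augmentation} with $f_1$ the $\ell$-layer identity network just constructed and $f_2=f$, both on $n$ inputs, to conclude that $(\operatorname{id},f)$ is representable by a neural network with $n$ inputs and $\ell$ layers. I do not expect a real obstacle here: the depth is matched by padding the identity with trivial layers, and the dimension bookkeeping---$m=n$ for each identity layer, arbitrary output width for $f$---is compatible with the hypotheses of the lemma. The only subtle point is remembering the $\sigma(0)$ offset in the identity layer, which is easy to miss if one implicitly assumes the activation vanishes at the origin.
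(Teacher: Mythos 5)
Your proof takes exactly the same route as the paper: realize \(\operatorname{id}\) as an \(\ell\)-layer network with \(A^k=0\), \(b^k=0\), \(C^k=I\), and then invoke Lemma~\ref{lem:augmentation} with \(f_1=\operatorname{id}\) and \(f_2=f\). The one difference is that you set \(d^k=-\sigma(0)\mathbf 1\) where the paper sets \(d^k=0\); your choice is the correct one in general, since the paper's layer computes \(x+\sigma(0)\mathbf 1\) and is the identity only when \(\sigma(0)=0\) (true for the sine activation, but not for the probit \(\Phi\) also used in the paper).
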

\begin{proof}
  In order to appeal to Lemma \ref{lem:augmentation}, we just have to represent the identity map as a neural network with \(\ell\) layers.
  This can be done by setting \(A^k = 0\), \(b^k = 0\), \(C^k = I\) and \(d^k = 0\) for all \(k \in \cbr{1 \ldots \ell}\).
\end{proof}

\subsection{Methods for uncertainty propagation: with application to Kalman filtering}
Let \(X\) be a multivariate Normal random variable and \(F\) a neural network following Def.~\ref{def:neural-network}.
Uncertainty propagation refers to approximating\footnote{For a deep neural network, accurate uncertainty propagation is \(\sharp\)P-hard; see \S5, \neuralUQ.} the mean and covariance matrix of \(Y = F(X)\).
It appears as the ``\(\normal\)'' operator in Alg.~\ref{alg:kalman-filter}, lines 2--3.
The introductory material draws on \citep[\S3]{neural-kalman-anonymous}.  

For any \(\sigma\), the exact moments of a single layer are given by three transcendental functions \(M_\sigma, K_\sigma, L_\sigma\) corresponding to bivariate Gaussian integrals:
\begin{lemma}[Lemma~1, \neuralUQ]
  For some activation function \(\sigma\), let \(g\) be the function defined by \(g_\sigma(x; A, b, C, d) = \sigma(A
  x+ b) + C x + d\).
  Let \(X \sim \mathcal N(\mu, \Sigma)\).
  Then
  \begin{align*}
    \del{\expect g_\sigma(X; A, b, C, d)}_i &= M_\sigma(\mu_i; \nu_{ii}) +
    (C\mu)_i + d_i
  \end{align*}
  and
  \begin{align*}
    \begin{split}
      \del{\Cov g_\sigma(X; A, b, C, d)}_{i, j} &=
        K_\sigma\del{
          \mu_i, \mu_j; \nu_{ii}, \nu_{jj}, \nu_{ij}
        } \\
        &\quad + L_\sigma\del{
          \mu_i; \nu_{ii}, \tau_{jj},\kappa_{ij}
        }
        \\
        &\quad + L_\sigma\del{
          \mu_j; \nu_{jj}, \tau_{ii},\kappa_{ji}
        } \\
        &\quad + \tau_{ij}.
    \end{split}
  \end{align*}
  where for all valid indices \((i, j)\),
  \begin{align*}
    \mu_i &= (A\mu + b)_i
    &
    \tau_{ij}
    &= (C\Sigma C^\intercal)_{i,j}
    \\
    \nu_{ij} &= (A\Sigma A^\intercal)_{i,j}
    &
    \kappa_{ij}
    &= (A \Sigma C^\intercal)_{i,j}
  \end{align*}
\end{lemma}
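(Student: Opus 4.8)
The plan is to reduce the whole computation to the affine-transformation property of Gaussians together with the bilinearity of covariance, after which each term matches one of the transcendental functions by definition. First I would introduce the pre-activation \(Z := AX + b\) and the linear skip term \(W := CX + d\), so that the layer output is \(Y = \sigma(Z) + W\) with \(\sigma\) applied elementwise. Since \((Z, W)\) is an affine image of the Gaussian vector \(X\), it is jointly Gaussian, and its moments are read off directly: \(\expect Z = A\mu + b\) (whose \(i\)-th entry is the \(\mu_i\) of the statement), \(\expect W = C\mu + d\), and the three covariance blocks \(\Cov(Z) = A\Sigma A^\intercal = \nu\), \(\Cov(W) = C\Sigma C^\intercal = \tau\), and \(\Cov(Z, W) = A\Sigma C^\intercal = \kappa\). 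This single observation supplies every scalar parameter that appears on the right-hand side.

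For the mean, linearity of expectation gives \(\expect Y_i = \expect \sigma(Z_i) + \expect W_i\). The marginal law of \(Z_i\) is \(\mathcal N(\mu_i, \nu_{ii})\), and \(\expect \sigma(Z_i)\) is by definition the univariate Gaussian integral \(M_\sigma(\mu_i; \nu_{ii})\); meanwhile \(\expect W_i = (C\mu)_i + d_i\). This already yields the claimed mean formula with no further work.

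For the covariance I would expand \(\Cov(Y_i, Y_j)\) by bilinearity into the four cross terms \(\Cov(\sigma(Z_i), \sigma(Z_j))\), \(\Cov(\sigma(Z_i), W_j)\), \(\Cov(W_i, \sigma(Z_j))\), and \(\Cov(W_i, W_j)\), which are precisely the four summands of the statement. The last is immediate: \(\Cov(W_i, W_j) = (C\Sigma C^\intercal)_{ij} = \tau_{ij}\). The first depends only on the bivariate Gaussian \((Z_i, Z_j)\), which is fully determined by \(\mu_i, \mu_j, \nu_{ii}, \nu_{jj}, \nu_{ij}\), so it equals \(K_\sigma(\mu_i, \mu_j; \nu_{ii}, \nu_{jj}, \nu_{ij})\) by definition. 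The two mixed terms depend on the bivariate Gaussians \((Z_i, W_j)\) and \((Z_j, W_i)\), characterized respectively by \((\mu_i; \nu_{ii}, \tau_{jj}, \kappa_{ij})\) and \((\mu_j; \nu_{jj}, \tau_{ii}, \kappa_{ji})\), and are therefore the two \(L_\sigma\) terms.

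The only point demanding care—and what I would regard as the main, if modest, obstacle—is justifying that each cross-covariance is genuinely a function of only the listed moments. This rests on the fact that a bivariate Gaussian is completely specified by its mean vector and covariance matrix, so that the covariance of any square-integrable functions of the two coordinates is determined by those finitely many parameters; in particular the mixed \(L_\sigma\) terms do not collapse to products of marginal expectations unless \(\kappa_{ij} = 0\), which is why they must be carried as genuine bivariate integrals. Identifying \(M_\sigma, K_\sigma, L_\sigma\) with these reduced integrals is then purely a matter of definition, and no closed form for them is needed for the decomposition to hold exactly.
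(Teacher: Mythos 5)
Your proposal is correct and is exactly the argument the statement's structure encodes: the paper itself states this lemma without proof (importing it from the cited reference), and the four summands of the covariance formula are precisely your four bilinear cross terms $\Cov(\sigma(Z_i),\sigma(Z_j))$, $\Cov(\sigma(Z_i),W_j)$, $\Cov(W_i,\sigma(Z_j))$, $\Cov(W_i,W_j)$, with the mean handled by linearity. The only cosmetic point worth tightening is that the pair $(Z_i,W_j)$ is \emph{not} fully determined by $(\mu_i;\nu_{ii},\tau_{jj},\kappa_{ij})$ --- the mean of $W_j$ is missing from that list --- but since covariance is invariant to shifting $W_j$ by a constant, $\Cov(\sigma(Z_i),W_j)$ depends only on the listed parameters, so your identification with $L_\sigma$ stands.
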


In particular, this work uses the sine activation function \citep{sitzmann_implicit_2020}, for which the moment maps are given by
\begin{align}
\MoveEqLeft
  M_{\sin}(\mu; \nu) = e^{-\nu/2} \sin(\mu)
  \tag{Lem.~6, \neuralUQ}
\end{align}
\begin{multline}
  K_{\sin}(\mu_1, \mu_2; \nu_{11}, \nu_{22}, \nu_{12}) 
  \\
  = \frac{1}{2} \sbr{e^{-\frac{\nu_{11} + \nu_{22}}{2} + \nu_{12}} - e^{-\frac{\nu_{11} + \nu_{22}}{2}}} \cos(\mu_1 - \mu_2) \\
  - \frac{1}{2} \sbr{e^{-\frac{\nu_{11} + \nu_{22}}{2} - \nu_{12}} - e^{-\frac{\nu_{11} + \nu_{22}}{2}}} \cos(\mu_1 + \mu_2)\\
  \tag{Lem.~7, \neuralUQ}
\end{multline}
\begin{gather}
  L_{\sin}(\mu_1, \mu_2; \nu_{11}, \nu_{22}, \nu_{12})
  = \nu_{12} e^{-\nu_{11}/2} \cos(\mu_1)
  \tag{Lem.~8, \neuralUQ}
\end{gather}

This paper introduces the \textbf{\textsc{analytic} Kalman filter}, in which ``\(\normal\)'' is implemented using the layer-by-layer moment matching method introduced in \neuralUQ, which provides analytical expressions for the cases where \(\sigma\) is a Normal CDF function or a sinusoid.
\begin{definition}
    Let \(f\) be a neural network with \(\ell\) layers.
    Given \(X \sim \mathcal N(\mu, \Sigma)\), the layer-wise Gaussian approximation of \(f(X)\), denoted \(Y_\mathrm{ana} = \Normal f(X)\), is the random variable defined by
    \begin{align*}
        Y_\mathrm{ana} &=  Y^\ell\\
        Y^k &= \normal g(Y^{k-1}; A^k, b^k, C^k, d^k) & \forall k &\in \cbr{1 \ldots \ell} \\
        Y^0 &= X
    \end{align*}
\end{definition}

Baseline methods are presented in \appendixref{app:baselines}.
Each of these methods is also tested with the \textbf{\textsc{recalibrate}} variation in which the \textsc{Update} step (Alg.~\ref{alg:kalman-filter}) is replaced by the recalibrate/back-out procedure described in \cite{jiang_mitigating_2026}, which claims that the \textsc{recalibrate} variation of nonlinear Kalman filtering confers a more conservative management of state uncertainty in the presence of strong nonlinearity.


\section{Performance criteria: beyond RMSE}
At each time step, Problems \ref{problem:prediction}, \ref{problem:filtering}, and \ref{problem:smoothing} call for a (predictive or posterior) distribution \(\hat X\) with mean \(\hat \mu\) and covariance \(\hat \Sigma\).
We also know the ground truth state \(x\) at each time step.
Thus the performance of the prediction, filtering, or smoothing \(\hat X\) should be understood via the joint distribution of \((x, \hat \mu, \hat \Sigma)\).

This section explores the choice of a criterion function \(J(x, \mu, \Sigma)\) to be averaged over the joint distribution of \((x, \hat \mu, \hat \Sigma)\).
If the underlying dynamic system is ergodic, then the phase average is also the time average;
the average-case performance of the Kalman filter is the same as the long-run performance.
%
%
\begin{definition}[RMSE]
The root-mean-square error is defined by \(J(x, \mu, \Sigma) = \norm{x - \mu}_2\).
\end{definition}
Most works we review treat the RMSE as the only figure of merit in Kalman filtering and ignore \(\Sigma\).\footnote{
For example, in \citet{jiang_mitigating_2026}, \(\Sigma\) does not attain a statistical meaning; its only responsibility is to generate the exploration region for the subsequent state update.
}
But it is frequently asserted that the distribution \(\mathcal N(\mu, \Sigma)\) should be interpreted as a Bayesian predictive or posterior distribution for \(x\).
Then it is not enough to have the right answer \(\mu\); the algorithm should also have the right uncertainty \(\Sigma\).
This agreement is called \emph{calibration}.
Calibration can be understood as a generalization of betting income on the outcome of a binary event \citep[Example~6.1.1]{cover_elements_2006}.
Linear filtering is calibrated, and state uncertainty can be used e.g.~to trade off between safety and performance in control applications \citep{kishida_risk-aware_2024}.
On the other hand, calibration failures have been blamed for the tendency of conversational large language models to assert confidently incorrect facts \citep{kalai_why_2025}.
Let us examine the implications of calibration.

First,
the triples \((x, \mu, \Sigma)\) should be appear to have come from a hierarchical model in which \(x \sim \mathcal N(\mu, \Sigma)\) where \((\mu, \Sigma)\) have the true marginal distribution of the latter two variables in \((x, \mu, \Sigma)\).%
\begin{subequations}%
\begin{align}%
  \mu, \Sigma &\sim \text{true distribution}
  \\
  x \mid \mu, \Sigma &\sim \mathcal N(\mu, \Sigma)
\end{align}%
\label{eq:hierarchical-model}%
\end{subequations}%
The goodness-of-fit of this model is summarized by the log likelihood \(\log p_{x \sim \mathcal N(\mu, \Sigma)}(x)\).
This induces the cross entropy criterion, which is also used in \cite{deisenroth_analytic_2009}.
\begin{definition}[Cross entropy]
  The cross entropy criterion is defined by 
  \begin{align*}
    J(x, \mu, \Sigma)
    = \frac{1}{2} \log \det \Sigma + \frac{1}{2} (x - \mu)^\intercal \Sigma^{-1} (x - \mu).
  \end{align*}
\end{definition}
It can also be understood as the statistically optimal trade-off between confidence (small \(\Sigma\)) and caution (large \(\Sigma\)).
Holding \(\Sigma\) fixed, the cross entropy is minimized on average when \(\mu \approx x\).
Holding \(\mu\) fixed, the cross entropy is minimized on average when \(\Sigma \approx \expect{(x - \mu)(x - \mu)^\intercal}\).
This shows that a lower cross entropy combines point estimation error and uncertainty estimation error.

Second, model \eqref{eq:hierarchical-model} also implies
\((x -\mu)^\intercal \Sigma^{-1} (x-\mu)\) has the \(\chi^2\) distribution with \(n_x\) degrees of freedom.
Let \(F_{\chi^2(n_x)}\) be its cumulative distribution function.
Then for all \(\alpha \in [0, 1]\), it holds that
\begin{align*}
  \probability\sbr{
    (x -\mu)^\intercal \Sigma^{-1} (x-\mu) 
    \leq F_{\chi^2(n_x)}^{-1}(1-\alpha)
  }
  \geq 1 - \alpha.
\end{align*}
Thus \(\{x \mid (x -\mu)^\intercal \Sigma^{-1} (x-\mu) \leq F_{\chi^2(n_x)}^{-1}(1-\alpha)\}\) is a \((1-\alpha)\)-confidence set for \(x\) \citep[\S6.3.2]{wasserman_all_2004}, which motivates the coverage criterion.
\begin{definition}[Coverage]
  The \((1-\alpha)\)-coverage criterion is defined by
  \begin{align*}
    J(x, \mu, \Sigma)
    = \begin{cases}
      1, & (x -\mu)^\intercal \Sigma^{-1} (x-\mu) 
        \leq F_{\chi^2(n_x)}^{-1}(1-\alpha)\\
      0, & \text{otherwise}
    \end{cases}
  \end{align*}
\end{definition}
If \(\alpha = 0.05\), then the expected value of the \((1-\alpha)\)-coverage criterion is the long-run frequentist probability that a 95\% confidence set for \(x\) traps the true value of \(x\).
We simultaneously seek to minimize the volume of these confidence sets.
\begin{definition}[Coverage volume]
  The \((1-\alpha)\)-coverage volume criterion is defined by
  \begin{align*}
    J(x, \mu, \Sigma)
    = 
    \sbr{F_{\chi^2(n_x)}^{-1}(1-\alpha)}^{n_x/2}
    V_{n_x}
    \sqrt{\det \Sigma} ,
  \end{align*}
  where \(V_{n_x}\) is the volume of an \(n_x\)-ball.
\end{definition}

A filter can have a low RMSE and good coverage yet still be statistically suboptimal, as the
examples in \appendixref{app:motivating-examples} show.
The takeaway of this section, therefore, is that
cross entropy has the last word on the validity of a statistical inference.

\section{Example: stochastic Lorenz system estimation using a surrogate model}
\label{sec:stochastic-lorenz-system}
The Lorenz system is a reduced-order continuous-time model of atmospheric convection.
Here, we consider the problem of estimating all three states \((x^1, x^2, x^3)\) of the Lorenz system from a sequence of noisy and temporally sparse measurements of \(x^1\). 
The system of interest is the sampled stochastic Lorenz system with a deterministic initial condition.
\begin{subequations}
  \begin{align}
    x_0 &=  (-8, 4, 27) 
    \\
    x_{t} &= F(x_{t-1}, B_t) &\forall t \in \cbr{1 \ldots T}.
  \end{align}
\end{subequations}
\(B_t\) is an independent Wiener process for each \(t \in \cbr{1 \ldots T}\).
The transition function \(F\) is given by
\begin{align}
  F(x, B) &= \xi(\Delta t) \\
  \intertext{subject to}
  \xi(0) &= x \\
  \xi(t) &= \int_0^t f(\xi(s)) \dif s + \eta B(t) &\forall t \in [0, \Delta t]
\end{align}
where \(f\) is the Lorenz vector field. (See \appendixref{app:lorenz} for details.)

\subsection{Results}
The discussion rounds to three significant digits and refers to the full results in \appendixref{app:stochastic-lorenz-system-results}.

The \textsc{analytic} method achieves the lowest RMSE at 15.2 (prediction), 9.76 (filtering), and 9.58 (smoothing) (Table~\ref{tab:lorenz_results_rmse}).
The next best method, \textsc{unscented'95 (recal)}, has a prediction RMSE of \(16.2\), filtering RMSE of \(12.5\), and smoothing RMSE of \(14.9\).
On an RMSE basis, from prediction to filtering to smoothing, \textsc{linear} performs worse despite seeing more data (21.4, 71.1, 72.3).
This shows that the linearized dynamics don't contain enough information to assimilate new data.
Moreover, all of these RMSEs are greater than the population RMS variation of the Lorenz system; the EKF performs worse than doing nothing (ignoring the measurements and reporting the same state at every time step).
The \textsc{unscented'02} Kalman Filters performs yet an order of magnitude worse.
Compared to \textsc{unscented'95}, the highly localized hyperparameter tuning of \textsc{unscented'02} results in profound instability to the point of numerical indeterminacy (with covariance matrices having condition numbers on the order of \(10^{14}\)).

We realized only after experiments that \textsc{mean-field} uncertainty propagation is simply not eligible for recursive Bayesian inference at all, as its mean-field Ansatz nullifies the cross-covariance computations set forth in \S\ref{sec:identity-augmentation}.
The output of \textsc{mean-field} ends up being a stationary distribution with no updates.

The \textsc{analytic} methods' coverage meets or slightly exceeds the nominal 95\% at 98.5\% (prediction), 97.9\% (filtering), and 97.7\% (smoothing) (Table~\ref{tab:lorenz_results_coverage95}).
In contrast, \textsc{linear} performs much worse with 5.9\%, 6.1\%, and 1.2\% coverage, and \textsc{unscented'95} achieves 50.5\%, 53.5\%, and 35.8\% coverage respectively.
The disastrous coverage of these two methods is explained by their smaller confidence regions: for example, the \textsc{linear} smoother's 95\% confidence regions are on average 1000x smaller than those of the \textsc{analytic} smoother (Table~\ref{tab:lorenz_results_volume95}).

Finally, the cross entropy of \textsc{analytic} (10.1, 5.72, 5.67) is significantly lower than all of the others (Table~\ref{tab:lorenz_results_cross_entropy}).

Visual inspection of the trajectory and coverage plots (Figures~\ref{fig:Method.ANALYTIC-trajectory}--\ref{fig:Method.UNSCENTED1-coverage}) provides qualitative insights that complement these quantitative results.
The trajectory plots superimpose 100 time steps of the ground truth on top of the method's 90\% confidence region.
The \textsc{analytic} method (Figure~\ref{fig:Method.ANALYTIC-trajectory}) exhibits the tightest alignment with the ground truth.
In contrast, the \textsc{linear} method (Figure~\ref{fig:Method.LINEAR-trajectory}) is confidently incorrect.
The \textsc{unscented} variants (Figures~\ref{fig:Method.UNSCENTED0-trajectory}--\ref{fig:Method.UNSCENTED1-trajectory}) demonstrate intermediate performance.

The \textsc{analytic} method achieves close to nominal coverage across all confidence levels (Figure~\ref{fig:Method.ANALYTIC-coverage}), whereas the \textsc{linear} method exhibits drastic undercoverage at all confidence levels (Figure~\ref{fig:Method.LINEAR-coverage}).

In summary, the \textsc{analytic} method results in truer point predictions and point estimates as well as truer confidence regions than all of the other methods, by up to a millionfold.

\section{Example: LTI dynamics with nonlinear output}
This section deals with the discrete-time Wiener system:
\begin{align}
  x_{t+1} &= A x_t + B u_t + \eta_t &\eta_t &\sim \mathcal{N}(0, Q)\\
  y_t &= H(x_t, u_t) + \epsilon_t &\epsilon_t &\sim \mathcal{N}(0, R)
\end{align}
where \(x_{t} \in \mathbb{R}^{n_x}\), \(u_t \in \mathbb{R}^{n_u}\), \(y_t \in \mathbb{R}^{n_y}\), \(A \in \mathbb{R}^{n_x \times n_x}\), \(B \in \mathbb{R}^{n_x \times n_u}\), \(Q \in \mathbb{R}^{n_x \times n_x}\), \(R \in \mathbb{R}^{n_y \times n_y}\), \(H: \mathbb{R}^{n_x + n_u} \to \mathbb{R}^{n_y}\) is a known strongly nonlinear function represented as a neural network, and \(\eta_t\) and \(\epsilon_t\) are independent noise terms.

\subsection{Prediction, filtering, smoothing of a stable system}
\label{sec:lti-estimation}
We consider the case where \(A\) is stable, \(u\) is a sinusoidal signal, and the objective is to estimate \(\{x_t\}\).
The details on this system and its simulation are given in Appendix~\ref{app:lti-estimation}.
Full results are listed in \appendixref{app:lti-estimation-results}.

In brief, the \textsc{analytic} method scores the lowest RMSE across prediction (1.38), filtering (1.31), and smoothing (0.994) (Table~\ref{tab:results_rmse}).
In second place is \textsc{unscented'95}, which scores 6.56, 6.54, and 6.61, respectively.
Without recalibration, the worst RMSE is achieved by \textsc{linear} (31.5, 31.5, 32.0), with \textsc{unscented'02} slightly better (24.4, 24.6, 25.6).

Calibration wise, without recalibration \textsc{linear} scores the worst coverage, with 28.4\% (prediction), 25.8\% (filtering), and 15.4\% (smoothing) (Table~\ref{tab:results_coverage95}); despite drawing confidence regions comparable in size to those of \textsc{analytic}, it captures the true state far less often.
The \textsc{unscented'02} method is similarly overconfident, covering only 57.6\%, 55.9\%, and 45.5\%.
In light of the moderate point estimation error, the poor coverage of these methods is striking (Figures~\ref{fig:trajectory-linear-no}--\ref{fig:trajectory-linear-yes}).

The \textsc{analytic} method results in well-calibrated 95\% confidence regions in all three problems, with 93.8\% actual coverage in smoothing (Table~\ref{tab:results_coverage95}).
The \textsc{unscented'95} smoother achieves 73.7\% true coverage but with smaller confidence volumes than \textsc{analytic} (Table~\ref{tab:results_volume95}).

Balancing these competing priorities with the cross entropy criterion, we see that \textsc{analytic} dominates among non-recalibrated methods, with cross entropies of $-1.73$, $-0.47$, and $0.11$ (Table~\ref{tab:results_cross_entropy}).
The \textsc{linear}, \textsc{unscented'95}, and \textsc{unscented'02} methods produce catastrophic cross entropies on the order of $10^2$--$10^5$, reflecting severe overconfidence.

\paragraph{Effect of covariance recalibration.}
We also evaluate each method under the covariance-recalibrated framework of \citet{jiang_new_2025}, which is claimed to improve Kalman filtering with a nonlinear \(H\) by re-evaluating the covariance around the updated state.
Figures~\ref{fig:recal-effect-rmse}--\ref{fig:recal-effect-cross-entropy} summarize how each metric changes between the non-recalibrated and recalibrated variants of each method.

For \textsc{analytic}, recalibration has a small effect on RMSE (e.g., $0.994 \to 0.900$ in smoothing), while coverage improves slightly from 94.3\% to 95.8\% in prediction and from 93.8\% to 96.3\% in smoothing (Figures~\ref{fig:recal-effect-rmse} and~\ref{fig:recal-effect-coverage95}).
This is consistent with the theoretical prediction of \citet{jiang_new_2025} that recalibration becomes unnecessary when the uncertainty propagation is already accurate, since the re-approximated quantities then agree with the originals.
In cross entropy, recalibration improves \textsc{analytic}'s smoothing score from $0.11$ to $-4.64$, yielding a well-separated first place (Fig.~\ref{fig:recal-effect-cross-entropy}); the effect on prediction and filtering is smaller than \textsc{analytic}'s first-place margin over the other methods.
We do not have an explanation for this observation.

By contrast, recalibration produces huge improvements for the less accurate propagation methods.
For \textsc{unscented'95}, it closes most of the gap to \textsc{analytic}: prediction RMSE improves from $6.56$ to $1.68$, coverage rises from 82.3\% to 93.3\%, and cross entropy drops from $918$ to $-1.69$.
For \textsc{unscented'02}, RMSE drops by nearly an order of magnitude (e.g., $24.4 \to 2.98$ in prediction), coverage jumps from 57.6\% to 91.0\%, and cross entropy falls from $1.1 \times 10^4$ to $0.097$.
For \textsc{linear}, RMSE is cut by roughly a third (e.g., $31.5 \to 9.83$ in prediction), coverage rises from 28.4\% to 68.2\%, and cross entropy drops by two orders of magnitude (e.g., $2.4 \times 10^4 \to 103$).
These results confirm the central claim of \citet{jiang_new_2025}: the conventional Kalman update systematically underestimates posterior covariance when the uncertainty propagation is approximate, and re-evaluating the covariance around the updated state substantially mitigates this overconfidence.

The takeaway of this experiment is twofold.
First, even when conventional methods can achieve a low RMSE, achieving statistical calibration takes more work, and \textsc{analytic} gives the best tradeoff between sensitivity and specificity in its uncertainty estimates.
Second, the benefits of covariance recalibration are most pronounced for methods with less accurate uncertainty propagation; when combined with the accurate moment propagation of \textsc{analytic} the benefits are less consistent, but can still yield best-in-class results.

\subsection{Quadratic regulation of an unstable system}
\label{sec:lti-regulation}
We consider the case where \(A\) is marginally stable, and the objective is to 
minimize a quadratic cost function of \(\{x_t\}\) and \(\{u_t\}\).
The control law is the linear quadratic regulator \(u=-K \hat x\), where \(\hat x\) is the Kalman filter estimate.
The details on this system and its simulation are given in Appendix~\ref{app:lti-regulation}.
Full results are listed in \appendixref{app:lti-regulation-results}.

Without recalibration, the total quadratic cost incurred by the \textsc{analytic}-powered controller
is only 6\% higher than that of the optimal linear quadratic regulator using linear state feedback (Table~\ref{tab:results_regulation}).
All of the other state estimation methods (\textsc{linear}, \textsc{unscented'95}, \textsc{unscented'02}) result in an effectively unstable closed-loop system, with costs of 8\,500--31\,000 times the optimal LQR.
This degradation is explained by inferior state estimation: these filters exhibit 400--680 times the RMSE of \textsc{analytic} (Table~\ref{tab:results_regulation_rmse}), and their 95\% confidence regions cover the true state less than 2\% of the time, compared to 90.4\% for \textsc{analytic} (Table~\ref{tab:results_regulation_coverage95}).

\paragraph{Effect of covariance recalibration.}
For \textsc{analytic}, recalibration has negligible effect on the regulation cost (1.06$\times$ LQR with or without), consistent with the estimation results above.

For the other methods, recalibration has a striking but mixed effect on closed-loop performance.
\textsc{unscented'95 (recal)} sees the most dramatic improvement: total cost drops from 10\,200$\times$ to 10.2$\times$ LQR, RMSE falls from 16.6 to 0.49, and coverage rises from 1.7\% to 85.2\%.
\textsc{unscented'02 (recal)} also improves substantially, with cost dropping from 8\,500$\times$ to 342$\times$ LQR, RMSE from 14.9 to 2.96, and coverage from 1.7\% to 94.6\%.
By contrast, \textsc{linear (recal)} does not benefit: its total cost actually increases from 31\,000$\times$ to 220\,000$\times$ LQR, suggesting that the linearization errors in \textsc{linear} are too severe for recalibration alone to rescue the closed-loop system.


\section{Conclusion}
The novelty of this work is that it applies analytic neural network uncertainty propagation to Kalman filtering and RTS smoothing by explicitly constructing input couplings as neural networks.
We view the different Kalman filter variants as induced by different versions of uncertainty propagation, which become interchangable implementations of the same abstract Kalman filter.
Also, this work raises the issue of calibration in nonlinear Kalman filtering and RTS smoothing, which is needed to ensure that the posterior distributions' Gaussian approximations are valid.

The significance of this work is that state estimation powered by neural network uncertainty propagation delivers more accurate state estimates, with better-calibrated uncertainty, than existing methods.
Our numerical examples exemplify this dramatic improvement by not only achieving state-of-the-art performance on a challenging Lorenz system identification problem, but also by delivering more accurate state estimates than existing methods on a linear system estimation problem.
In applications, this means more nonlinear systems, longer sampling periods between measurements, greater process and observation noise, weaker measurements, more optimal risk-aware decisions, and more optimal state-feedback controllers.

\acks{This work was supported by the National Science Foundation CAREER Program (Grant No. 2046292).}

\bibliography{neural-uq,nn-filtering}

\clearpage
\appendix
\section*{Supplementary material}
\tableofcontents

\clearpage
\listoftables

\clearpage
\listoffigures

\clearpage
\section{Pseudocode for Kalman filter and RTS smoother}
\label{sec:algorithms}
\begin{algorithm2e}[H]%
  \caption{%
    \label{alg:kalman-filter}%
    General Kalman algorithm for recursive \textbf{prediction} (problem
    \ref{problem:prediction}) and \textbf{filtering} (problem
  \ref{problem:filtering})}
  \DontPrintSemicolon
  \SetKwProg{Fn}{Function}{}{}
  \SetKwFunction{FPredict}{Predict}
\SetKwFunction{FUpdate}{Update}
\SetKwFunction{FFilter}{Filter}
\SetKwInOut{Input}{Input}
\Input{State transition function \(F\) and observation model \(H\)}
\Input{State covariance \(Q\) and observation covariance \(R\)}
\Input{Recalibrate flag \(\texttt{recal} \in \{\texttt{true}, \texttt{false}\}\)}
\Fn{\FPredict{$X, u$}}{
  \tcc{Propagate \(X\) through state transition}
  $X' \gets \normal F(X, u) + \mathcal{N}(0, Q)$
  \;
  \tcc{Propagate \(X'\) through observation model}
  $((X', u), Y') \gets \normal (\text{id}, H)(X', u) + \mathcal{N}(0, 0_{n_x} \oplus R)$ 
  \;
  \tcc{Get joint distribution of next state and next output}
  $(X', Y') \gets\ $ projection of \(((X', u), Y')\)\;    
  \Return$(X', Y')$ 
}
\Fn{\FUpdate{$(X, Y), y$}}{
  \tcc{Apply Bayes' rule}
  $(X', y) \gets$ conditional distribution of $(X,Y)$ given $Y=y$
  \;
  \Return$X'$
}
\Fn{\FFilter{$u_1, \ldots, u_t, y_1, \ldots, y_t$}}{
  $\hat X_{0 \mid 0} \gets \mathcal{N}(\mu_0, \Sigma_0)$
  \;
  \For{$k \in [1 \ldots t]$}{
    \tcc{Solution to problem \ref{problem:prediction}}
    $(\hat X_{k \mid k-1}, \hat Y_{k \mid k-1}) \gets \texttt{Predict}(\hat X_{k-1 \mid k-1}, u_k)$
    \;
    \tcc{Solution to problem \ref{problem:filtering}}
    $\hat X_{k \mid k} \gets \texttt{Update}((\hat X_{k \mid k-1}, \hat Y_{k \mid k-1}), y_k)$
    \;
    \tcc{Recalibration}
    \lIf{\texttt{recal}}{$\hat X_{k \mid k} \gets \texttt{Recal}((\hat X_{k \mid k-1}, \hat Y_{k \mid k-1}), \hat X_{k \mid k}, y_k, u_k)$}
  }
  \Return$\{ \hat X_{k \mid k} \}_{k=1}^t$
}
\end{algorithm2e}

\begin{algorithm2e}[H]%
  \caption{%
    \label{alg:kalman-filter-recal}%
    Recalibrated update step generalized from \citet{jiang_new_2025}}
  \DontPrintSemicolon
  \SetKwProg{Fn}{Function}{}{}
  \SetKwFunction{FRecal}{Recal}
  \SetKwFunction{FPredict}{Predict}
  \SetKwInOut{Input}{Input}
\Fn{\FRecal{$(X, Y), X', y, u$}}{
  \tcc{Kalman prediction (Algorithm \ref{alg:kalman-filter}) using corrected mean}
  $(X^+, Y^+) \gets \texttt{Predict}(\mathcal N(\expect X', \Cov X), u)$
  \;
  \tcc{Kalman gain}
  $K \gets \Cov(Y, X)\Cov(Y, Y)^{-1}$
  \;
  \tcc{Recalibrated covariance}
  $P^+ \gets \begin{pmatrix} I \\ -K \end{pmatrix}^\intercal \begin{pmatrix} \Cov(X, X) & \Cov(X^+, Y^+) \\ \Cov(Y^+, X^+) & \Cov(Y^+, Y^+) \end{pmatrix} \begin{pmatrix} I \\ -K \end{pmatrix}$
  \;
  \tcc{Back out if recalibration increased uncertainty}
  \lIf{$\trace P^+ > \trace \Cov X$}{
    \Return$\mathcal N(\expect X, \Cov X)$
  }
  \Return$\mathcal N(\expect X', P^+)$
}
\end{algorithm2e}
\clearpage

\begin{algorithm2e}[H]
  \caption{
    \label{alg:rts-smoother}
    General RTS algorithm for recursive \textbf{smoothing} (problem
    \ref{problem:smoothing})}
  \DontPrintSemicolon
  \SetKwProg{Fn}{Function}{}{}
  \SetKwFunction{FPredict}{Predict}
  \SetKwFunction{FUpdate}{Update}
  \SetKwFunction{FSmooth}{Smooth}
  \SetKwInOut{Input}{Input}
  \Input{State transition function \(F\)}
  \Input{State covariance \(Q\)}
  
  \Fn{\FPredict{$X, u$}}{
    \tcc{Propagate \(X\) through state transition}
    $((X, u), X') \gets \normal (\text{id}, F)(X, u) + \mathcal{N}(0, 0 \oplus Q)$
    \;
    \tcc{Get joint distribution of current and next state}
    \Return$(X, X')$
  }
  
  \Fn{\FUpdate{$(X, X'), X''$}}{
    \tcc{Apply Bayes' rule}
    $(X, X'') \gets$ conditional distribution of $(X,X')$ given $X' = X''$
    \;
    \Return$X$
  }
  
  \Fn{\FSmooth{$u_1, \ldots, u_T, y_1, \ldots, y_T$}}{
    $\{\hat X_{k|k}\}_{k=1}^T \gets \texttt{Filter}(u_1, \ldots, u_T, y_1, \ldots, y_T)$
    \;
    \For{$k \in [T-1, \ldots, 0]$}{
      \tcc{Compute joint distribution of current and next state}
      $(\hat X_{k|k}, \hat X_{k+1|k}) \gets \texttt{Predict}(\hat X_{k \mid k}, u_{k+1})$
      \;
      \tcc{Solution to problem \ref{problem:smoothing}}
      $\hat X_{k|T} \gets \texttt{Update}((\hat X_{k|k}, \hat X_{k+1|k}), \hat X_{k+1|T})$
      \;
    }
    \Return$\{\hat X_{k|T}\}_{k=1}^T$
  }
\end{algorithm2e}

\clearpage
\section{Motivating examples for confidence calibration}
\label{app:motivating-examples}

Let us consider three toy examples of joint distributions \((x, \mu, \Sigma)\) having a suboptimal \(\Sigma\). 
We will write \(J = J(\epsilon, \Sigma)\) with \(\epsilon = x - \mu\).
Every example has the same RMSE.
We vary only the law of \(\Sigma\) and its coupling with \(\epsilon\) in order to show that 
they make all the difference.
\begin{example}
  Let \(\epsilon \sim \mathcal N(0, I)\) and \(\Sigma = \frac{1}{2}I\).
  Then the 95\% coverage criterion has expected value strictly less than \(0.95\).
\end{example}
\begin{example}
  Let \(\epsilon \sim \mathcal N(0, I)\) and \(\Sigma = 2I\).
  Then the 95\% coverage criterion has expected value strictly greater than \(0.95\).
\end{example}
\begin{example}
  Let \(\epsilon \sim \mathcal N(0, I)\) and, independently, \(\Sigma\) obeys
  \begin{align*}
    \Sigma &= \begin{cases}
      0 I & \text{with probability } 0.05\\
      \infty I & \text{with probability } 0.95
    \end{cases}
  \end{align*}
  Then the 95\% coverage criterion has expectation 95\%, which is perfect.
  But no other confidence level is correctly covered, and the coverage volume is too large.
  The cross entropy is \(+\infty\).
\end{example}
This last example shows that coverage is necessary but not sufficient for a good predictive (posterior) distribution.
We desire that \(x\) is trapped by \(\Sigma\)'s confidence intervals not only in an average sense, but also conditional on \(x\).
\begin{example}
  Consider the following mixture model.
  We flip an unfair coin that comes up heads with probability 0.05.
  If heads, then \(\epsilon \sim \mathcal N(0, 1000I)\), \(\Sigma_1 = 0.001I\), and \(\Sigma_2 = 1000I\).
  If tails, then \(\epsilon \sim \mathcal N(0, I)\), \(\Sigma_1 = 1000I\), and \(\Sigma_2 = I\).

  Both \((\epsilon, \Sigma_1)\) and \((\epsilon, \Sigma_2)\) approximately achieve 95\% coverage.
  But the clearly superior \(\Sigma_2\) minimizes the cross entropy criterion.
\end{example}

\clearpage
\section{Proof of Lemma \ref{lem:augmentation}}
\label{app:proof-augmentation}
For \(j \in \{1, 2\}\), let \(f_j\) be defined by
\begin{align}
  f_j(x) &= f_j^\ell(x), \\
  f_j^k(x) &= g(f_j^{k-1}(x); A_j^k, b_j^k, C_j^k, d_j^k), & k \in \cbr{1 \ldots \ell},
  \\
  f_j^0(x) &= x
\end{align}
Now define \(f_\text{aug} =(f_1, f_2)\) by
\begin{align}
  f_\text{aug}(x) &= f_\text{aug}^\ell(x), \\
  f_\text{aug}^k(x) &= g(f_\text{aug}^{k-1}(x); A_\text{aug}^k, b_\text{aug}^k, C_\text{aug}^k, d_\text{aug}^k), & k \in \cbr{1 \ldots \ell},
  \\
  f_\text{aug}^0(x) &= x
\end{align}
where
\begin{align}
  A_\text{aug}^1 &= \begin{pmatrix}
    A_1 \\ A_2 
  \end{pmatrix}
  &
  b_\text{aug}^1 &= \begin{pmatrix}
    b_1 \\ b_2
  \end{pmatrix}
  \\
  C_\text{aug}^1 &= \begin{pmatrix}
    C_1 \\ C_2
  \end{pmatrix}
  &
  d_\text{aug}^1 &= \begin{pmatrix}
    d_1 \\ d_2
  \end{pmatrix}
\end{align}
and for \(k \in \cbr{2 \ldots \ell}\),
\begin{align}
  A_\text{aug}^k &= \begin{pmatrix}
    A_1^k & 0 \\
    0 & A_2^k
  \end{pmatrix}
  &
  b_\text{aug}^k &= \begin{pmatrix}
    b_1^k \\ b_2^k
  \end{pmatrix}
  \\
  C_\text{aug}^k &= \begin{pmatrix}
    C_1^k & 0 \\
    0 & C_2^k
  \end{pmatrix}
  &
  d_\text{aug}^k &= \begin{pmatrix}
    d_1^k \\ d_2^k
  \end{pmatrix}
\end{align}
\section{Filtering and Smoothing baseline methods}
\label{app:baselines}
The implementation of ``\(\normal\)'' is varied to generate baseline Kalman filters.
Inspired by \citet{huber_bayesian_2020, wagner_kalman_2022,akgul_deterministic_2025}, we define a \textbf{\textsc{mean-field} analytic Kalman filter} by assuming that neurons in the same hidden layer are independent:
\begin{align*}
  Y_\mathrm{mfa} &=  Y^\ell\\
  Y^k &= \mathcal N(\mu^k, \Sigma^k) \quad \forall k \in \cbr{1 \ldots \ell} \\
  \mu^k &= \expect g(Y^{k-1}; A^k, b^k, C^k, d^k)
  \\
  \Sigma^k_{ij} &= \begin{cases}
    \sbr{\Cov g(Y^{k-1}; A^k, b^k, C^k, d^k)}_{ij}, & i=j
    \\
    0, &\text{else}
  \end{cases}
  \\
  Y^0 &= X
\end{align*}
The \textbf{\textsc{extended} Kalman filter} in works such as \citet{jiang_mitigating_2026,oveissi_novel_2025} uses a linearization of the neural network
\citep{titensky_uncertainty_2018, nagel_kalman-bucy-informed_2022,petersen_uncertainty_2024, jungmann_analytical_2025}.
The \textbf{\textsc{unscented'95} Kalman filter} introduced in \citet{julier_new_1995,julier_new_1997,julier_new_2000} is a one-parameter family of transformations that approximate the distribution of \(X\) by \(2n +1\) point masses.
The \textbf{\textsc{unscented'02} Kalman filter} introduced in \citet{julier_scaled_2002,wan_unscented_2000} adds two additional hyperparameters.
It is more commonly used today \citep{jiang_new_2025,anurag_rcukf_2025} and is the default in off-the-shelf software \citep{ljung_unscentedkalmanfilter_2025}.

\clearpage
\section{Supplement to \S\ref{sec:stochastic-lorenz-system}}
\label{app:lorenz}
The Lorenz vector field is given by
\begin{subequations}
\label{eq:stochastic-lorenz-system}
\begin{align}
  f\begin{pmatrix}
    x^1\\
    x^2\\
    x^3
  \end{pmatrix}
  &= \begin{pmatrix}
    \sigma (x^2 - x^1)\\
    x^1(\rho - x^3) - x^2\\
    x^1 x^2 - \beta x^3
  \end{pmatrix}
\end{align}
\end{subequations}
where \(\sigma = 10\), \(\rho = 28\), and \(\beta = 8/3\).
The process noise standard deviation is \(\eta = 0.001\).
The measurement noise standard deviation is \(\epsilon = 0.1\).
The sampling time is \(\Delta t = 1.0\).

In order to learn the generative model \ref{eq:dynamic-system} from a
training set \(\{x_t\}_{t =0}^{T_\text{train}}\),
we learn the neural network \(F\) and process
covariance \(Q\) simultaneously by maximizing the profile likelihood:
\begin{alignat}{2}
\label{eq:profile-likelihood}
    &\min_{F, Q}
    \quad
    && \det Q
    \\
    &\text{subject to}
    \quad
    && Q = \frac{1}{T_\text{train} - 1} \sum_{t=2}^{T_\text{train}}
            \epsilon_t \epsilon_t^\intercal
    \notag\\
    & && \epsilon_t = x_t - F(x_{t-1}, u_t)
    \notag
\end{alignat}

\subsection*{Implementation details}
We generated data using the stochastic integrator \citep{foster_high_2023} implemented by Diffrax \citep{kidger_neural_2022} within the JAX programming system \citep{deepmind_deepmind_2020,kidger_equinox_2021,bradbury_jax_2018}.
The training and validation datasets both had length \(T_\text{train} =200,000\).
The neural network had five hidden layers with 64 units each.
The activation function was sine, and there were residual connections between hidden layers.

We optimized \eqref{eq:profile-likelihood} using 20,000 epochs of AdamW \citep{loshchilov_decoupled_2019} implemented in Optax \citep{deepmind_deepmind_2020}.
We used a minibatch size of 10,000 and decayed the learning rate from \(10^{-4}\) to \(10^{-5}\) over 20,000 epochs.
%
%
This took about four hours on a Nvidia T1200 GPU, an Intel® Core™ i7-11850H CPU, and 32GB of RAM.

We report mean \(\pm\) standard error quantities based on twenty independent realizations of the test data.

\clearpage
\section{Supplement to \S\ref{sec:lti-estimation}}
\label{app:lti-estimation}
We generated the dynamic system with the following parameters:
\begin{itemize}
    \item State dimension: \( n_x = 5 \)
    \item Output dimension: \( n_y = 3 \)
    \item Input dimension: \( n_u = 1 \)
    \item \(A\) and \(B\) are in controllable canonical form with eigenvalues \( (0.9, 0.7, 0.5, 0.3, 0.1) \).
    \item \(H\) is a neural network with one hidden layer of 50 units and a \(\Phi\) activation, followed by a linear layer.
    The weights and biases are randomly initialized from normal distributions with default scaling.
    The second linear layer has zero weights and zero biases.
    \(H\) is trained to interpolate \(M=100\) random input-output pairs using the CoCoB optimizer for 10,000 steps.
    \item Noise covariances: 
        \begin{itemize}
            \item Process noise: \( Q = 10^{-3} I_{n_x} \)
            \item Measurement noise: \( R = 10^{-3} I_{n_y} \)
        \end{itemize}
\end{itemize}

The input is \(u(t) = \sin(0.2 t)\).
The initial state is \(0\) and the initial state estimate is \(\mathcal{N}(0, 0)\).

The horizon is \(T = 10000\) time steps.
We report mean \(\pm\) standard error quantities based on twenty independent realizations of the noises \(\eta_t\) and \(\epsilon_t\).

\clearpage
\section{Supplement to \S\ref{sec:lti-regulation}}
\label{app:lti-regulation}
We generated the dynamic system with the following parameters:
\begin{itemize}
    \item State dimension: \( n_x = 4 \)
    \item Output dimension: \( n_y = 8 \)
    \item Input dimension: \( n_u = 1 \)
    \item \(A\) and \(B\) are in controllable canonical form with eigenvalues \( (1, -1, 0.1, -0.1) \).
    \item \(H\) is a neural network with one hidden layer of 50 units and a \(\Phi\) activation, follow by another affine layer and a \(\Phi\) activation.
    The weights and biases are randomly initialized from normal distributions.
    In the first hidden layer, the weights are scaled by ten times the inverse square root of the number of inputs, and the biases are scaled by the inverse square root of the number of inputs.
    In the second hidden layer, the weights are scaled by the inverse square root of the number of inputs and the biases are set to zero.
    \item Noise covariances: 
        \begin{itemize}
            \item Process noise: \( Q = 10^{-2} I_{n_x} \)
            \item Measurement noise: \( R = 10^{-4} I_{n_y} \)
        \end{itemize}
\end{itemize}

The control objective is to minimize \(\sum_t \left\|x_t\right\|^2 + \sum_t \left\|u_t\right\|^2\).
The initial state is \(0\) and the initial state estimate is \(\mathcal{N}(0, 0)\).

The horizon is \(T = 10000\) time steps.
We report mean \(\pm\) standard error quantities based on twenty independent realizations of the noises \(\eta_t\) and \(\epsilon_t\).

\clearpage
\section{Full results for \S\ref{sec:stochastic-lorenz-system}}
\label{app:stochastic-lorenz-system-results}
\begin{table}[htbp!]
\centering
\caption{RMSE in the Lorenz system state estimation problem}
\label{tab:lorenz_results_rmse}
\begin{tabular}{lll}
\toprule
Method & Task & Value $\pm$ Monte Carlo standard error \\
\midrule
\textsc{{\textsc{analytic}}} & Prediction $(t|t-1)$ & \num[print-zero-exponent = true,print-implicit-plus=true,print-exponent-implicit-plus=true]{1.525e+01} \ensuremath{\pm} \num[print-zero-exponent = true,print-exponent-implicit-plus=true]{1.1e-02} \\
 & Filtering $(t|t)$ & \num[print-zero-exponent = true,print-implicit-plus=true,print-exponent-implicit-plus=true]{9.762e+00} \ensuremath{\pm} \num[print-zero-exponent = true,print-exponent-implicit-plus=true]{9.9e-03} \\
 & Smoothing $(t|T)$ & \num[print-zero-exponent = true,print-implicit-plus=true,print-exponent-implicit-plus=true]{9.579e+00} \ensuremath{\pm} \num[print-zero-exponent = true,print-exponent-implicit-plus=true]{1.0e-02} \\
 &  &  \\
\textsc{{\textsc{mean-field}}} & Prediction $(t|t-1)$ & \num[print-zero-exponent = true,print-implicit-plus=true,print-exponent-implicit-plus=true]{2.007e+01} \ensuremath{\pm} \num[print-zero-exponent = true,print-exponent-implicit-plus=true]{2.2e-02} \\
 & Filtering $(t|t)$ & \num[print-zero-exponent = true,print-implicit-plus=true,print-exponent-implicit-plus=true]{2.007e+01} \ensuremath{\pm} \num[print-zero-exponent = true,print-exponent-implicit-plus=true]{2.2e-02} \\
 & Smoothing $(t|T)$ & \num[print-zero-exponent = true,print-implicit-plus=true,print-exponent-implicit-plus=true]{2.007e+01} \ensuremath{\pm} \num[print-zero-exponent = true,print-exponent-implicit-plus=true]{2.2e-02} \\
 &  &  \\
\textsc{{\textsc{linear}}} & Prediction $(t|t-1)$ & \num[print-zero-exponent = true,print-implicit-plus=true,print-exponent-implicit-plus=true]{2.137e+01} \ensuremath{\pm} \num[print-zero-exponent = true,print-exponent-implicit-plus=true]{2.9e-02} \\
 & Filtering $(t|t)$ & \num[print-zero-exponent = true,print-implicit-plus=true,print-exponent-implicit-plus=true]{7.109e+01} \ensuremath{\pm} \num[print-zero-exponent = true,print-exponent-implicit-plus=true]{2.3e+00} \\
 & Smoothing $(t|T)$ & \num[print-zero-exponent = true,print-implicit-plus=true,print-exponent-implicit-plus=true]{7.232e+01} \ensuremath{\pm} \num[print-zero-exponent = true,print-exponent-implicit-plus=true]{2.3e+00} \\
 &  &  \\
\textsc{{\textsc{unscented'95}}} & Prediction $(t|t-1)$ & \num[print-zero-exponent = true,print-implicit-plus=true,print-exponent-implicit-plus=true]{1.623e+01} \ensuremath{\pm} \num[print-zero-exponent = true,print-exponent-implicit-plus=true]{2.3e-02} \\
 & Filtering $(t|t)$ & \num[print-zero-exponent = true,print-implicit-plus=true,print-exponent-implicit-plus=true]{1.256e+01} \ensuremath{\pm} \num[print-zero-exponent = true,print-exponent-implicit-plus=true]{4.3e-02} \\
 & Smoothing $(t|T)$ & \num[print-zero-exponent = true,print-implicit-plus=true,print-exponent-implicit-plus=true]{1.507e+01} \ensuremath{\pm} \num[print-zero-exponent = true,print-exponent-implicit-plus=true]{5.6e-02} \\
 &  &  \\
\textsc{{\textsc{unscented'02}}} & Prediction $(t|t-1)$ & --- \ensuremath{\pm} --- \\
 & Filtering $(t|t)$ & --- \ensuremath{\pm} --- \\
 & Smoothing $(t|T)$ & --- \ensuremath{\pm} --- \\
\bottomrule
\end{tabular}
\end{table}

\begin{table}[htbp!]
\centering
\caption{Coverage at 95\% in the Lorenz system state estimation problem}
\label{tab:lorenz_results_coverage95}
\begin{tabular}{lll}
\toprule
Method & Task & Value $\pm$ Monte Carlo standard error \\
\midrule
\textsc{{\textsc{analytic}}} & Prediction $(t|t-1)$ & \num[print-zero-exponent = true,print-implicit-plus=true,print-exponent-implicit-plus=true]{9.850e-01} \ensuremath{\pm} \num[print-zero-exponent = true,print-exponent-implicit-plus=true]{2.2e-04} \\
 & Filtering $(t|t)$ & \num[print-zero-exponent = true,print-implicit-plus=true,print-exponent-implicit-plus=true]{9.794e-01} \ensuremath{\pm} \num[print-zero-exponent = true,print-exponent-implicit-plus=true]{2.1e-04} \\
 & Smoothing $(t|T)$ & \num[print-zero-exponent = true,print-implicit-plus=true,print-exponent-implicit-plus=true]{9.767e-01} \ensuremath{\pm} \num[print-zero-exponent = true,print-exponent-implicit-plus=true]{2.6e-04} \\
 &  &  \\
\textsc{{\textsc{mean-field}}} & Prediction $(t|t-1)$ & \num[print-zero-exponent = true,print-implicit-plus=true,print-exponent-implicit-plus=true]{7.195e-01} \ensuremath{\pm} \num[print-zero-exponent = true,print-exponent-implicit-plus=true]{1.3e-03} \\
 & Filtering $(t|t)$ & \num[print-zero-exponent = true,print-implicit-plus=true,print-exponent-implicit-plus=true]{7.195e-01} \ensuremath{\pm} \num[print-zero-exponent = true,print-exponent-implicit-plus=true]{1.3e-03} \\
 & Smoothing $(t|T)$ & \num[print-zero-exponent = true,print-implicit-plus=true,print-exponent-implicit-plus=true]{7.195e-01} \ensuremath{\pm} \num[print-zero-exponent = true,print-exponent-implicit-plus=true]{1.3e-03} \\
 &  &  \\
\textsc{{\textsc{linear}}} & Prediction $(t|t-1)$ & \num[print-zero-exponent = true,print-implicit-plus=true,print-exponent-implicit-plus=true]{5.942e-02} \ensuremath{\pm} \num[print-zero-exponent = true,print-exponent-implicit-plus=true]{1.1e-03} \\
 & Filtering $(t|t)$ & \num[print-zero-exponent = true,print-implicit-plus=true,print-exponent-implicit-plus=true]{6.143e-02} \ensuremath{\pm} \num[print-zero-exponent = true,print-exponent-implicit-plus=true]{1.1e-03} \\
 & Smoothing $(t|T)$ & \num[print-zero-exponent = true,print-implicit-plus=true,print-exponent-implicit-plus=true]{1.223e-02} \ensuremath{\pm} \num[print-zero-exponent = true,print-exponent-implicit-plus=true]{5.8e-04} \\
 &  &  \\
\textsc{{\textsc{unscented'95}}} & Prediction $(t|t-1)$ & \num[print-zero-exponent = true,print-implicit-plus=true,print-exponent-implicit-plus=true]{5.050e-01} \ensuremath{\pm} \num[print-zero-exponent = true,print-exponent-implicit-plus=true]{1.3e-03} \\
 & Filtering $(t|t)$ & \num[print-zero-exponent = true,print-implicit-plus=true,print-exponent-implicit-plus=true]{5.346e-01} \ensuremath{\pm} \num[print-zero-exponent = true,print-exponent-implicit-plus=true]{1.3e-03} \\
 & Smoothing $(t|T)$ & \num[print-zero-exponent = true,print-implicit-plus=true,print-exponent-implicit-plus=true]{2.728e-01} \ensuremath{\pm} \num[print-zero-exponent = true,print-exponent-implicit-plus=true]{1.2e-03} \\
 &  &  \\
\textsc{{\textsc{unscented'02}}} & Prediction $(t|t-1)$ & \num[print-zero-exponent = true,print-implicit-plus=true,print-exponent-implicit-plus=true]{9.677e-01} \ensuremath{\pm} \num[print-zero-exponent = true,print-exponent-implicit-plus=true]{2.5e-02} \\
 & Filtering $(t|t)$ & \num[print-zero-exponent = true,print-implicit-plus=true,print-exponent-implicit-plus=true]{9.360e-01} \ensuremath{\pm} \num[print-zero-exponent = true,print-exponent-implicit-plus=true]{2.4e-02} \\
 & Smoothing $(t|T)$ & \num[print-zero-exponent = true,print-implicit-plus=true,print-exponent-implicit-plus=true]{8.491e-01} \ensuremath{\pm} \num[print-zero-exponent = true,print-exponent-implicit-plus=true]{6.3e-02} \\
\bottomrule
\end{tabular}
\end{table}

\begin{table}[htbp!]
\centering
\caption{Volume at 95\% in the Lorenz system state estimation problem}
\label{tab:lorenz_results_volume95}
\begin{tabular}{lll}
\toprule
Method & Task & Value $\pm$ Monte Carlo standard error \\
\midrule
\textsc{{\textsc{analytic}}} & Prediction $(t|t-1)$ & \num[print-zero-exponent = true,print-implicit-plus=true,print-exponent-implicit-plus=true]{4.609e+04} \ensuremath{\pm} \num[print-zero-exponent = true,print-exponent-implicit-plus=true]{1.8e+01} \\
 & Filtering $(t|t)$ & \num[print-zero-exponent = true,print-implicit-plus=true,print-exponent-implicit-plus=true]{5.053e+02} \ensuremath{\pm} \num[print-zero-exponent = true,print-exponent-implicit-plus=true]{1.1e-01} \\
 & Smoothing $(t|T)$ & \num[print-zero-exponent = true,print-implicit-plus=true,print-exponent-implicit-plus=true]{4.823e+02} \ensuremath{\pm} \num[print-zero-exponent = true,print-exponent-implicit-plus=true]{1.2e-01} \\
 &  &  \\
\textsc{{\textsc{mean-field}}} & Prediction $(t|t-1)$ & \num[print-zero-exponent = true,print-implicit-plus=true,print-exponent-implicit-plus=true]{4.984e+04} \ensuremath{\pm} 0 \\
 & Filtering $(t|t)$ & \num[print-zero-exponent = true,print-implicit-plus=true,print-exponent-implicit-plus=true]{4.984e+04} \ensuremath{\pm} 0 \\
 & Smoothing $(t|T)$ & \num[print-zero-exponent = true,print-implicit-plus=true,print-exponent-implicit-plus=true]{4.984e+04} \ensuremath{\pm} 0 \\
 &  &  \\
\textsc{{\textsc{linear}}} & Prediction $(t|t-1)$ & \num[print-zero-exponent = true,print-implicit-plus=true,print-exponent-implicit-plus=true]{7.236e+03} \ensuremath{\pm} \num[print-zero-exponent = true,print-exponent-implicit-plus=true]{1.1e+02} \\
 & Filtering $(t|t)$ & \num[print-zero-exponent = true,print-implicit-plus=true,print-exponent-implicit-plus=true]{4.040e+01} \ensuremath{\pm} \num[print-zero-exponent = true,print-exponent-implicit-plus=true]{2.5e-01} \\
 & Smoothing $(t|T)$ & \num[print-zero-exponent = true,print-implicit-plus=true,print-exponent-implicit-plus=true]{1.644e-01} \ensuremath{\pm} \num[print-zero-exponent = true,print-exponent-implicit-plus=true]{9.3e-04} \\
 &  &  \\
\textsc{{\textsc{unscented'95}}} & Prediction $(t|t-1)$ & \num[print-zero-exponent = true,print-implicit-plus=true,print-exponent-implicit-plus=true]{9.029e+03} \ensuremath{\pm} \num[print-zero-exponent = true,print-exponent-implicit-plus=true]{3.1e+01} \\
 & Filtering $(t|t)$ & \num[print-zero-exponent = true,print-implicit-plus=true,print-exponent-implicit-plus=true]{1.349e+02} \ensuremath{\pm} \num[print-zero-exponent = true,print-exponent-implicit-plus=true]{4.1e-01} \\
 & Smoothing $(t|T)$ & \num[print-zero-exponent = true,print-implicit-plus=true,print-exponent-implicit-plus=true]{5.061e+01} \ensuremath{\pm} \num[print-zero-exponent = true,print-exponent-implicit-plus=true]{1.8e-01} \\
 &  &  \\
\textsc{{\textsc{unscented'02}}} & Prediction $(t|t-1)$ & --- \ensuremath{\pm} --- \\
 & Filtering $(t|t)$ & --- \ensuremath{\pm} --- \\
 & Smoothing $(t|T)$ & --- \ensuremath{\pm} --- \\
\bottomrule
\end{tabular}
\end{table}

\begin{table}[htbp!]
\centering
\caption{Cross entropy in the Lorenz system state estimation problem}
\label{tab:lorenz_results_cross_entropy}
\begin{tabular}{lll}
\toprule
Method & Task & Value $\pm$ Monte Carlo standard error \\
\midrule
\textsc{{\textsc{analytic}}} & Prediction $(t|t-1)$ & \num[print-zero-exponent = true,print-implicit-plus=true,print-exponent-implicit-plus=true]{1.014e+01} \ensuremath{\pm} \num[print-zero-exponent = true,print-exponent-implicit-plus=true]{1.9e-03} \\
 & Filtering $(t|t)$ & \num[print-zero-exponent = true,print-implicit-plus=true,print-exponent-implicit-plus=true]{5.717e+00} \ensuremath{\pm} \num[print-zero-exponent = true,print-exponent-implicit-plus=true]{1.2e-03} \\
 & Smoothing $(t|T)$ & \num[print-zero-exponent = true,print-implicit-plus=true,print-exponent-implicit-plus=true]{5.667e+00} \ensuremath{\pm} \num[print-zero-exponent = true,print-exponent-implicit-plus=true]{1.4e-03} \\
 &  &  \\
\textsc{{\textsc{mean-field}}} & Prediction $(t|t-1)$ & \num[print-zero-exponent = true,print-implicit-plus=true,print-exponent-implicit-plus=true]{1.223e+01} \ensuremath{\pm} \num[print-zero-exponent = true,print-exponent-implicit-plus=true]{7.5e-03} \\
 & Filtering $(t|t)$ & \num[print-zero-exponent = true,print-implicit-plus=true,print-exponent-implicit-plus=true]{1.223e+01} \ensuremath{\pm} \num[print-zero-exponent = true,print-exponent-implicit-plus=true]{7.5e-03} \\
 & Smoothing $(t|T)$ & \num[print-zero-exponent = true,print-implicit-plus=true,print-exponent-implicit-plus=true]{1.223e+01} \ensuremath{\pm} \num[print-zero-exponent = true,print-exponent-implicit-plus=true]{7.5e-03} \\
 &  &  \\
\textsc{{\textsc{linear}}} & Prediction $(t|t-1)$ & \num[print-zero-exponent = true,print-implicit-plus=true,print-exponent-implicit-plus=true]{3.088e+02} \ensuremath{\pm} \num[print-zero-exponent = true,print-exponent-implicit-plus=true]{1.2e+00} \\
 & Filtering $(t|t)$ & \num[print-zero-exponent = true,print-implicit-plus=true,print-exponent-implicit-plus=true]{2.945e+02} \ensuremath{\pm} \num[print-zero-exponent = true,print-exponent-implicit-plus=true]{1.2e+00} \\
 & Smoothing $(t|T)$ & \num[print-zero-exponent = true,print-implicit-plus=true,print-exponent-implicit-plus=true]{1.325e+06} \ensuremath{\pm} \num[print-zero-exponent = true,print-exponent-implicit-plus=true]{1.2e+05} \\
 &  &  \\
\textsc{{\textsc{unscented'95}}} & Prediction $(t|t-1)$ & \num[print-zero-exponent = true,print-implicit-plus=true,print-exponent-implicit-plus=true]{2.391e+01} \ensuremath{\pm} \num[print-zero-exponent = true,print-exponent-implicit-plus=true]{1.3e-01} \\
 & Filtering $(t|t)$ & \num[print-zero-exponent = true,print-implicit-plus=true,print-exponent-implicit-plus=true]{1.824e+01} \ensuremath{\pm} \num[print-zero-exponent = true,print-exponent-implicit-plus=true]{1.3e-01} \\
 & Smoothing $(t|T)$ & \num[print-zero-exponent = true,print-implicit-plus=true,print-exponent-implicit-plus=true]{9.860e+01} \ensuremath{\pm} \num[print-zero-exponent = true,print-exponent-implicit-plus=true]{1.7e+00} \\
 &  &  \\
\textsc{{\textsc{unscented'02}}} & Prediction $(t|t-1)$ & --- \ensuremath{\pm} --- \\
 & Filtering $(t|t)$ & --- \ensuremath{\pm} --- \\
 & Smoothing $(t|T)$ & --- \ensuremath{\pm} --- \\
\bottomrule
\end{tabular}
\end{table}

\begin{figure}[H]
\begin{center}
\includegraphics{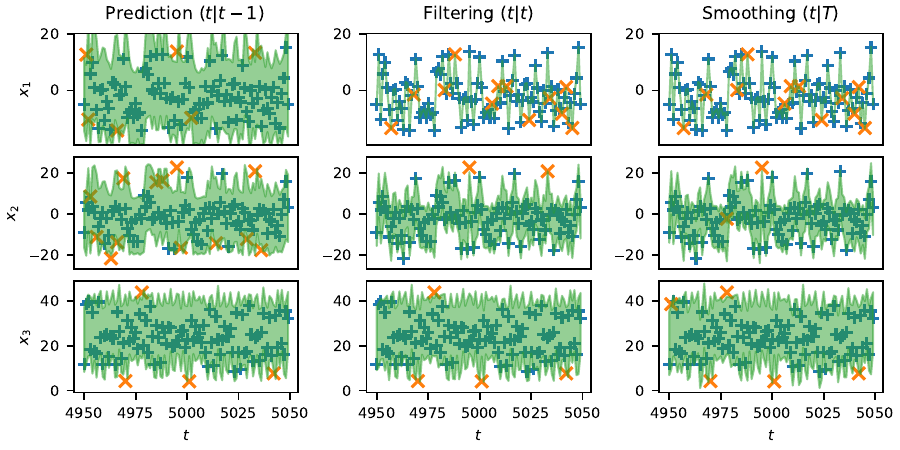}
\end{center}
\caption{\label{fig:Method.ANALYTIC-trajectory}Trajectory excerpt for Kalman filter \textsc{{\textsc{analytic}}} in the Lorenz system state estimation problem. Plus sign indicates hit; cross indicates miss: 10 crosses is best for nominal coverage.}
\end{figure}
\begin{figure}[H]
\begin{center}
\includegraphics{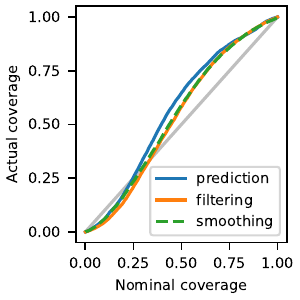}
\end{center}
\caption{\label{fig:Method.ANALYTIC-coverage}Coverage for Kalman filter \textsc{{\textsc{analytic}}} in the Lorenz system state estimation problem. Closer to identity is best.}
\end{figure}
\clearpage
\begin{figure}[H]
\begin{center}
\includegraphics{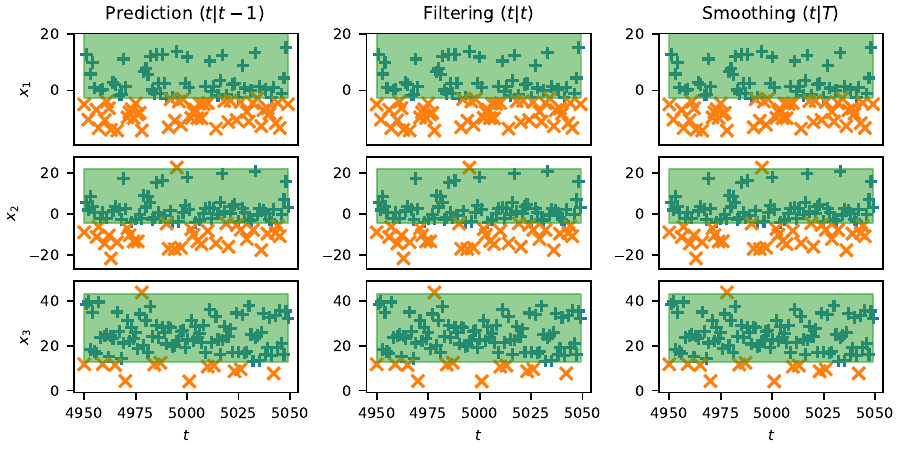}
\end{center}
\caption{\label{fig:Method.MEAN_FIELD-trajectory}Trajectory excerpt for Kalman filter \textsc{{\textsc{mean-field}}} in the Lorenz system state estimation problem. Plus sign indicates hit; cross indicates miss: 10 crosses is best for nominal coverage.}
\end{figure}
\begin{figure}[H]
\begin{center}
\includegraphics{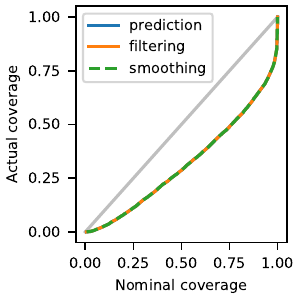}
\end{center}
\caption{\label{fig:Method.MEAN_FIELD-coverage}Coverage for Kalman filter \textsc{{\textsc{mean-field}}} in the Lorenz system state estimation problem. Closer to identity is best.}
\end{figure}
\clearpage
\begin{figure}[H]
\begin{center}
\includegraphics{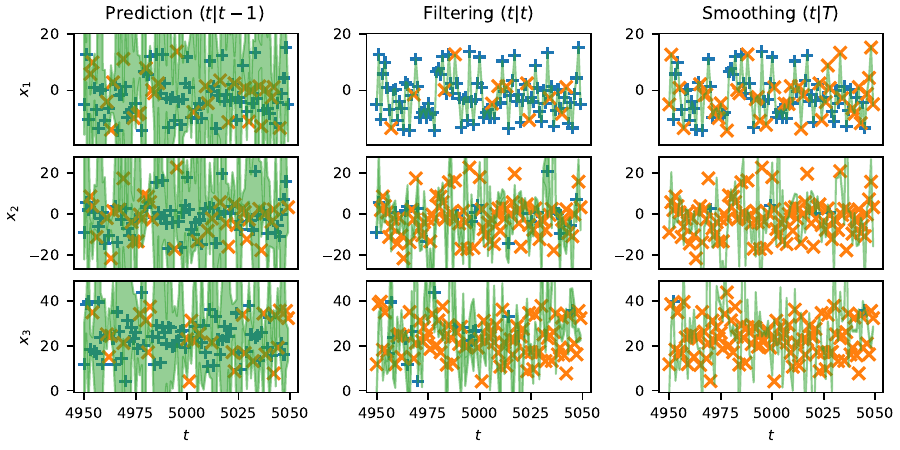}
\end{center}
\caption{\label{fig:Method.LINEAR-trajectory}Trajectory excerpt for Kalman filter \textsc{{\textsc{linear}}} in the Lorenz system state estimation problem. Plus sign indicates hit; cross indicates miss: 10 crosses is best for nominal coverage.}
\end{figure}
\begin{figure}[H]
\begin{center}
\includegraphics{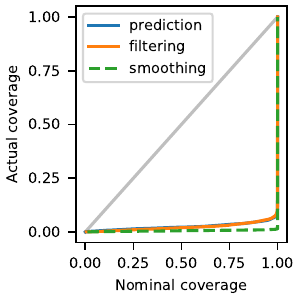}
\end{center}
\caption{\label{fig:Method.LINEAR-coverage}Coverage for Kalman filter \textsc{{\textsc{linear}}} in the Lorenz system state estimation problem. Closer to identity is best.}
\end{figure}
\clearpage
\begin{figure}[H]
\begin{center}
\includegraphics{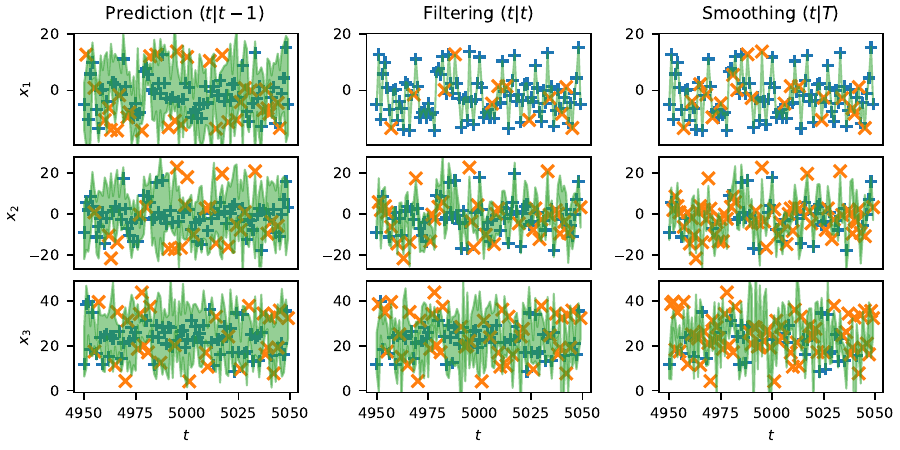}
\end{center}
\caption{\label{fig:Method.UNSCENTED0-trajectory}Trajectory excerpt for Kalman filter \textsc{{\textsc{unscented'95}}} in the Lorenz system state estimation problem. Plus sign indicates hit; cross indicates miss: 10 crosses is best for nominal coverage.}
\end{figure}
\begin{figure}[H]
\begin{center}
\includegraphics{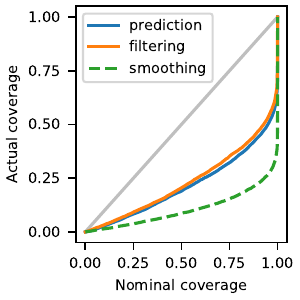}
\end{center}
\caption{\label{fig:Method.UNSCENTED0-coverage}Coverage for Kalman filter \textsc{{\textsc{unscented'95}}} in the Lorenz system state estimation problem. Closer to identity is best.}
\end{figure}
\clearpage
\begin{figure}[H]
\begin{center}
\includegraphics{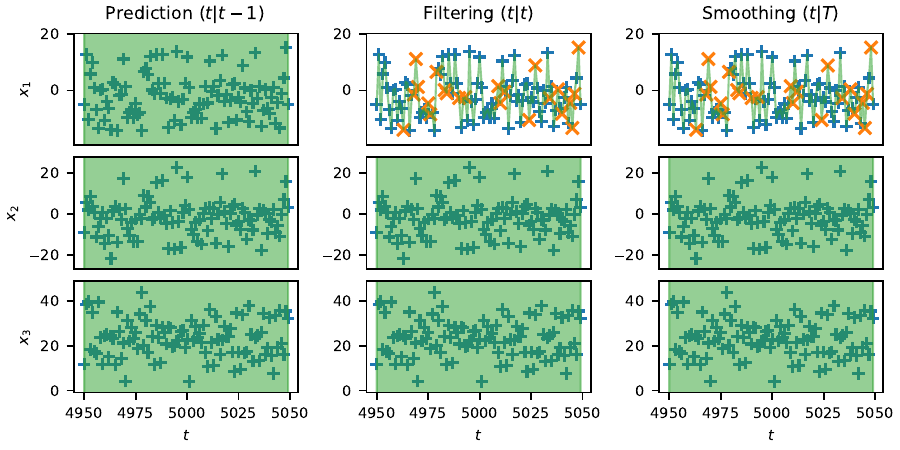}
\end{center}
\caption{\label{fig:Method.UNSCENTED1-trajectory}Trajectory excerpt for Kalman filter \textsc{{\textsc{unscented'02}}} in the Lorenz system state estimation problem. Plus sign indicates hit; cross indicates miss: 10 crosses is best for nominal coverage.}
\end{figure}
\begin{figure}[H]
\begin{center}
\includegraphics{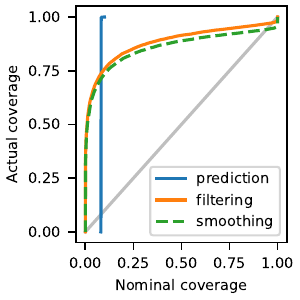}
\end{center}
\caption{\label{fig:Method.UNSCENTED1-coverage}Coverage for Kalman filter \textsc{{\textsc{unscented'02}}} in the Lorenz system state estimation problem. Closer to identity is best.}
\end{figure}
\clearpage

\clearpage
\section{Full results for \S\ref{sec:lti-estimation}}
\label{app:lti-estimation-results}
\begin{table}[htbp!]
\centering
\caption{RMSE for different methods and tasks in the LTI state estimation problem.}
\label{tab:results_rmse}
\begin{tabular}{lll}
\toprule
Method & Task & Value $\pm$ Monte Carlo standard error \\
\midrule
\textsc{analytic} & Prediction $(t|t-1)$ & \num[print-zero-exponent = true,print-implicit-plus=true,print-exponent-implicit-plus=true]{1.377555e+00} \ensuremath{\pm} \num[print-zero-exponent = true,print-exponent-implicit-plus=true]{3.3e-02} \\
 & Filtering $(t|t)$ & \num[print-zero-exponent = true,print-implicit-plus=true,print-exponent-implicit-plus=true]{1.310450e+00} \ensuremath{\pm} \num[print-zero-exponent = true,print-exponent-implicit-plus=true]{3.3e-02} \\
 & Smoothing $(t|T)$ & \num[print-zero-exponent = true,print-implicit-plus=true,print-exponent-implicit-plus=true]{9.936777e-01} \ensuremath{\pm} \num[print-zero-exponent = true,print-exponent-implicit-plus=true]{3.7e-02} \\
 &  &  \\
\textsc{analytic (recal)} & Prediction $(t|t-1)$ & \num[print-zero-exponent = true,print-implicit-plus=true,print-exponent-implicit-plus=true]{1.311325e+00} \ensuremath{\pm} \num[print-zero-exponent = true,print-exponent-implicit-plus=true]{8.5e-03} \\
 & Filtering $(t|t)$ & \num[print-zero-exponent = true,print-implicit-plus=true,print-exponent-implicit-plus=true]{1.241469e+00} \ensuremath{\pm} \num[print-zero-exponent = true,print-exponent-implicit-plus=true]{8.5e-03} \\
 & Smoothing $(t|T)$ & \num[print-zero-exponent = true,print-implicit-plus=true,print-exponent-implicit-plus=true]{8.996923e-01} \ensuremath{\pm} \num[print-zero-exponent = true,print-exponent-implicit-plus=true]{5.3e-03} \\
 &  &  \\
\textsc{linear} & Prediction $(t|t-1)$ & \num[print-zero-exponent = true,print-implicit-plus=true,print-exponent-implicit-plus=true]{3.145826e+01} \ensuremath{\pm} \num[print-zero-exponent = true,print-exponent-implicit-plus=true]{8.0e-01} \\
 & Filtering $(t|t)$ & \num[print-zero-exponent = true,print-implicit-plus=true,print-exponent-implicit-plus=true]{3.152544e+01} \ensuremath{\pm} \num[print-zero-exponent = true,print-exponent-implicit-plus=true]{8.0e-01} \\
 & Smoothing $(t|T)$ & \num[print-zero-exponent = true,print-implicit-plus=true,print-exponent-implicit-plus=true]{3.203451e+01} \ensuremath{\pm} \num[print-zero-exponent = true,print-exponent-implicit-plus=true]{8.1e-01} \\
 &  &  \\
\textsc{linear (recal)} & Prediction $(t|t-1)$ & \num[print-zero-exponent = true,print-implicit-plus=true,print-exponent-implicit-plus=true]{9.831827e+00} \ensuremath{\pm} \num[print-zero-exponent = true,print-exponent-implicit-plus=true]{7.0e-01} \\
 & Filtering $(t|t)$ & \num[print-zero-exponent = true,print-implicit-plus=true,print-exponent-implicit-plus=true]{9.901705e+00} \ensuremath{\pm} \num[print-zero-exponent = true,print-exponent-implicit-plus=true]{7.0e-01} \\
 & Smoothing $(t|T)$ & \num[print-zero-exponent = true,print-implicit-plus=true,print-exponent-implicit-plus=true]{1.085636e+01} \ensuremath{\pm} \num[print-zero-exponent = true,print-exponent-implicit-plus=true]{7.3e-01} \\
 &  &  \\
\textsc{unscented'95} & Prediction $(t|t-1)$ & \num[print-zero-exponent = true,print-implicit-plus=true,print-exponent-implicit-plus=true]{6.560435e+00} \ensuremath{\pm} \num[print-zero-exponent = true,print-exponent-implicit-plus=true]{6.8e-01} \\
 & Filtering $(t|t)$ & \num[print-zero-exponent = true,print-implicit-plus=true,print-exponent-implicit-plus=true]{6.541330e+00} \ensuremath{\pm} \num[print-zero-exponent = true,print-exponent-implicit-plus=true]{6.8e-01} \\
 & Smoothing $(t|T)$ & \num[print-zero-exponent = true,print-implicit-plus=true,print-exponent-implicit-plus=true]{6.613449e+00} \ensuremath{\pm} \num[print-zero-exponent = true,print-exponent-implicit-plus=true]{6.8e-01} \\
 &  &  \\
\textsc{unscented'95 (recal)} & Prediction $(t|t-1)$ & \num[print-zero-exponent = true,print-implicit-plus=true,print-exponent-implicit-plus=true]{1.676370e+00} \ensuremath{\pm} \num[print-zero-exponent = true,print-exponent-implicit-plus=true]{5.5e-02} \\
 & Filtering $(t|t)$ & \num[print-zero-exponent = true,print-implicit-plus=true,print-exponent-implicit-plus=true]{1.612820e+00} \ensuremath{\pm} \num[print-zero-exponent = true,print-exponent-implicit-plus=true]{5.7e-02} \\
 & Smoothing $(t|T)$ & \num[print-zero-exponent = true,print-implicit-plus=true,print-exponent-implicit-plus=true]{1.316173e+00} \ensuremath{\pm} \num[print-zero-exponent = true,print-exponent-implicit-plus=true]{6.9e-02} \\
 &  &  \\
\textsc{unscented'02} & Prediction $(t|t-1)$ & \num[print-zero-exponent = true,print-implicit-plus=true,print-exponent-implicit-plus=true]{2.439038e+01} \ensuremath{\pm} \num[print-zero-exponent = true,print-exponent-implicit-plus=true]{7.6e-01} \\
 & Filtering $(t|t)$ & \num[print-zero-exponent = true,print-implicit-plus=true,print-exponent-implicit-plus=true]{2.457422e+01} \ensuremath{\pm} \num[print-zero-exponent = true,print-exponent-implicit-plus=true]{7.7e-01} \\
 & Smoothing $(t|T)$ & \num[print-zero-exponent = true,print-implicit-plus=true,print-exponent-implicit-plus=true]{2.558191e+01} \ensuremath{\pm} \num[print-zero-exponent = true,print-exponent-implicit-plus=true]{8.1e-01} \\
 &  &  \\
\textsc{unscented'02 (recal)} & Prediction $(t|t-1)$ & \num[print-zero-exponent = true,print-implicit-plus=true,print-exponent-implicit-plus=true]{2.977201e+00} \ensuremath{\pm} \num[print-zero-exponent = true,print-exponent-implicit-plus=true]{1.2e-01} \\
 & Filtering $(t|t)$ & \num[print-zero-exponent = true,print-implicit-plus=true,print-exponent-implicit-plus=true]{2.979570e+00} \ensuremath{\pm} \num[print-zero-exponent = true,print-exponent-implicit-plus=true]{1.3e-01} \\
 & Smoothing $(t|T)$ & \num[print-zero-exponent = true,print-implicit-plus=true,print-exponent-implicit-plus=true]{2.847319e+00} \ensuremath{\pm} \num[print-zero-exponent = true,print-exponent-implicit-plus=true]{1.6e-01} \\
\bottomrule
\end{tabular}
\end{table}

\begin{table}[htbp!]
\centering
\caption{Coverage at 95\% for different methods and tasks in the LTI state estimation problem.}
\label{tab:results_coverage95}
\begin{tabular}{lll}
\toprule
Method & Task & Value $\pm$ Monte Carlo standard error \\
\midrule
\textsc{analytic} & Prediction $(t|t-1)$ & \num[print-zero-exponent = true,print-implicit-plus=true,print-exponent-implicit-plus=true]{9.432500e-01} \ensuremath{\pm} \num[print-zero-exponent = true,print-exponent-implicit-plus=true]{1.7e-03} \\
 & Filtering $(t|t)$ & \num[print-zero-exponent = true,print-implicit-plus=true,print-exponent-implicit-plus=true]{9.414889e-01} \ensuremath{\pm} \num[print-zero-exponent = true,print-exponent-implicit-plus=true]{1.6e-03} \\
 & Smoothing $(t|T)$ & \num[print-zero-exponent = true,print-implicit-plus=true,print-exponent-implicit-plus=true]{9.379111e-01} \ensuremath{\pm} \num[print-zero-exponent = true,print-exponent-implicit-plus=true]{2.0e-03} \\
 &  &  \\
\textsc{analytic (recal)} & Prediction $(t|t-1)$ & \num[print-zero-exponent = true,print-implicit-plus=true,print-exponent-implicit-plus=true]{9.577944e-01} \ensuremath{\pm} \num[print-zero-exponent = true,print-exponent-implicit-plus=true]{1.0e-03} \\
 & Filtering $(t|t)$ & \num[print-zero-exponent = true,print-implicit-plus=true,print-exponent-implicit-plus=true]{9.601778e-01} \ensuremath{\pm} \num[print-zero-exponent = true,print-exponent-implicit-plus=true]{8.8e-04} \\
 & Smoothing $(t|T)$ & \num[print-zero-exponent = true,print-implicit-plus=true,print-exponent-implicit-plus=true]{9.634222e-01} \ensuremath{\pm} \num[print-zero-exponent = true,print-exponent-implicit-plus=true]{7.3e-04} \\
 &  &  \\
\textsc{linear} & Prediction $(t|t-1)$ & \num[print-zero-exponent = true,print-implicit-plus=true,print-exponent-implicit-plus=true]{2.842667e-01} \ensuremath{\pm} \num[print-zero-exponent = true,print-exponent-implicit-plus=true]{7.1e-03} \\
 & Filtering $(t|t)$ & \num[print-zero-exponent = true,print-implicit-plus=true,print-exponent-implicit-plus=true]{2.583111e-01} \ensuremath{\pm} \num[print-zero-exponent = true,print-exponent-implicit-plus=true]{6.5e-03} \\
 & Smoothing $(t|T)$ & \num[print-zero-exponent = true,print-implicit-plus=true,print-exponent-implicit-plus=true]{1.537611e-01} \ensuremath{\pm} \num[print-zero-exponent = true,print-exponent-implicit-plus=true]{5.0e-03} \\
 &  &  \\
\textsc{linear (recal)} & Prediction $(t|t-1)$ & \num[print-zero-exponent = true,print-implicit-plus=true,print-exponent-implicit-plus=true]{6.818222e-01} \ensuremath{\pm} \num[print-zero-exponent = true,print-exponent-implicit-plus=true]{7.3e-03} \\
 & Filtering $(t|t)$ & \num[print-zero-exponent = true,print-implicit-plus=true,print-exponent-implicit-plus=true]{6.641333e-01} \ensuremath{\pm} \num[print-zero-exponent = true,print-exponent-implicit-plus=true]{7.3e-03} \\
 & Smoothing $(t|T)$ & \num[print-zero-exponent = true,print-implicit-plus=true,print-exponent-implicit-plus=true]{5.557056e-01} \ensuremath{\pm} \num[print-zero-exponent = true,print-exponent-implicit-plus=true]{8.1e-03} \\
 &  &  \\
\textsc{unscented'95} & Prediction $(t|t-1)$ & \num[print-zero-exponent = true,print-implicit-plus=true,print-exponent-implicit-plus=true]{8.226000e-01} \ensuremath{\pm} \num[print-zero-exponent = true,print-exponent-implicit-plus=true]{6.1e-03} \\
 & Filtering $(t|t)$ & \num[print-zero-exponent = true,print-implicit-plus=true,print-exponent-implicit-plus=true]{8.011222e-01} \ensuremath{\pm} \num[print-zero-exponent = true,print-exponent-implicit-plus=true]{6.0e-03} \\
 & Smoothing $(t|T)$ & \num[print-zero-exponent = true,print-implicit-plus=true,print-exponent-implicit-plus=true]{7.365500e-01} \ensuremath{\pm} \num[print-zero-exponent = true,print-exponent-implicit-plus=true]{7.4e-03} \\
 &  &  \\
\textsc{unscented'95 (recal)} & Prediction $(t|t-1)$ & \num[print-zero-exponent = true,print-implicit-plus=true,print-exponent-implicit-plus=true]{9.334389e-01} \ensuremath{\pm} \num[print-zero-exponent = true,print-exponent-implicit-plus=true]{2.2e-03} \\
 & Filtering $(t|t)$ & \num[print-zero-exponent = true,print-implicit-plus=true,print-exponent-implicit-plus=true]{9.295889e-01} \ensuremath{\pm} \num[print-zero-exponent = true,print-exponent-implicit-plus=true]{2.0e-03} \\
 & Smoothing $(t|T)$ & \num[print-zero-exponent = true,print-implicit-plus=true,print-exponent-implicit-plus=true]{9.159167e-01} \ensuremath{\pm} \num[print-zero-exponent = true,print-exponent-implicit-plus=true]{2.3e-03} \\
 &  &  \\
\textsc{unscented'02} & Prediction $(t|t-1)$ & \num[print-zero-exponent = true,print-implicit-plus=true,print-exponent-implicit-plus=true]{5.764778e-01} \ensuremath{\pm} \num[print-zero-exponent = true,print-exponent-implicit-plus=true]{8.1e-03} \\
 & Filtering $(t|t)$ & \num[print-zero-exponent = true,print-implicit-plus=true,print-exponent-implicit-plus=true]{5.592444e-01} \ensuremath{\pm} \num[print-zero-exponent = true,print-exponent-implicit-plus=true]{7.9e-03} \\
 & Smoothing $(t|T)$ & \num[print-zero-exponent = true,print-implicit-plus=true,print-exponent-implicit-plus=true]{4.547722e-01} \ensuremath{\pm} \num[print-zero-exponent = true,print-exponent-implicit-plus=true]{9.2e-03} \\
 &  &  \\
\textsc{unscented'02 (recal)} & Prediction $(t|t-1)$ & \num[print-zero-exponent = true,print-implicit-plus=true,print-exponent-implicit-plus=true]{9.097889e-01} \ensuremath{\pm} \num[print-zero-exponent = true,print-exponent-implicit-plus=true]{2.5e-03} \\
 & Filtering $(t|t)$ & \num[print-zero-exponent = true,print-implicit-plus=true,print-exponent-implicit-plus=true]{9.071000e-01} \ensuremath{\pm} \num[print-zero-exponent = true,print-exponent-implicit-plus=true]{2.6e-03} \\
 & Smoothing $(t|T)$ & \num[print-zero-exponent = true,print-implicit-plus=true,print-exponent-implicit-plus=true]{8.839278e-01} \ensuremath{\pm} \num[print-zero-exponent = true,print-exponent-implicit-plus=true]{3.9e-03} \\
\bottomrule
\end{tabular}
\end{table}

\begin{table}[htbp!]
\centering
\caption{Volume at 95\% for different methods and tasks in the LTI state estimation problem.}
\label{tab:results_volume95}
\begin{tabular}{lll}
\toprule
Method & Task & Value $\pm$ Monte Carlo standard error \\
\midrule
\textsc{analytic} & Prediction $(t|t-1)$ & \num[print-zero-exponent = true,print-implicit-plus=true,print-exponent-implicit-plus=true]{1.616272e-01} \ensuremath{\pm} \num[print-zero-exponent = true,print-exponent-implicit-plus=true]{2.7e-04} \\
 & Filtering $(t|t)$ & \num[print-zero-exponent = true,print-implicit-plus=true,print-exponent-implicit-plus=true]{1.166921e-01} \ensuremath{\pm} \num[print-zero-exponent = true,print-exponent-implicit-plus=true]{2.5e-04} \\
 & Smoothing $(t|T)$ & \num[print-zero-exponent = true,print-implicit-plus=true,print-exponent-implicit-plus=true]{2.748990e-02} \ensuremath{\pm} \num[print-zero-exponent = true,print-exponent-implicit-plus=true]{8.1e-05} \\
 &  &  \\
\textsc{analytic (recal)} & Prediction $(t|t-1)$ & \num[print-zero-exponent = true,print-implicit-plus=true,print-exponent-implicit-plus=true]{1.831996e-01} \ensuremath{\pm} \num[print-zero-exponent = true,print-exponent-implicit-plus=true]{2.6e-04} \\
 & Filtering $(t|t)$ & \num[print-zero-exponent = true,print-implicit-plus=true,print-exponent-implicit-plus=true]{1.363137e-01} \ensuremath{\pm} \num[print-zero-exponent = true,print-exponent-implicit-plus=true]{2.3e-04} \\
 & Smoothing $(t|T)$ & \num[print-zero-exponent = true,print-implicit-plus=true,print-exponent-implicit-plus=true]{3.361851e-02} \ensuremath{\pm} \num[print-zero-exponent = true,print-exponent-implicit-plus=true]{7.5e-05} \\
 &  &  \\
\textsc{linear} & Prediction $(t|t-1)$ & \num[print-zero-exponent = true,print-implicit-plus=true,print-exponent-implicit-plus=true]{1.233095e-01} \ensuremath{\pm} \num[print-zero-exponent = true,print-exponent-implicit-plus=true]{4.8e-04} \\
 & Filtering $(t|t)$ & \num[print-zero-exponent = true,print-implicit-plus=true,print-exponent-implicit-plus=true]{8.549655e-02} \ensuremath{\pm} \num[print-zero-exponent = true,print-exponent-implicit-plus=true]{3.4e-04} \\
 & Smoothing $(t|T)$ & \num[print-zero-exponent = true,print-implicit-plus=true,print-exponent-implicit-plus=true]{1.737160e-02} \ensuremath{\pm} \num[print-zero-exponent = true,print-exponent-implicit-plus=true]{9.4e-05} \\
 &  &  \\
\textsc{linear (recal)} & Prediction $(t|t-1)$ & \num[print-zero-exponent = true,print-implicit-plus=true,print-exponent-implicit-plus=true]{2.571232e-01} \ensuremath{\pm} \num[print-zero-exponent = true,print-exponent-implicit-plus=true]{1.5e-03} \\
 & Filtering $(t|t)$ & \num[print-zero-exponent = true,print-implicit-plus=true,print-exponent-implicit-plus=true]{2.026725e-01} \ensuremath{\pm} \num[print-zero-exponent = true,print-exponent-implicit-plus=true]{1.4e-03} \\
 & Smoothing $(t|T)$ & \num[print-zero-exponent = true,print-implicit-plus=true,print-exponent-implicit-plus=true]{5.929100e-02} \ensuremath{\pm} \num[print-zero-exponent = true,print-exponent-implicit-plus=true]{7.7e-04} \\
 &  &  \\
\textsc{unscented'95} & Prediction $(t|t-1)$ & \num[print-zero-exponent = true,print-implicit-plus=true,print-exponent-implicit-plus=true]{1.389514e-01} \ensuremath{\pm} \num[print-zero-exponent = true,print-exponent-implicit-plus=true]{3.1e-04} \\
 & Filtering $(t|t)$ & \num[print-zero-exponent = true,print-implicit-plus=true,print-exponent-implicit-plus=true]{9.609799e-02} \ensuremath{\pm} \num[print-zero-exponent = true,print-exponent-implicit-plus=true]{2.5e-04} \\
 & Smoothing $(t|T)$ & \num[print-zero-exponent = true,print-implicit-plus=true,print-exponent-implicit-plus=true]{1.977639e-02} \ensuremath{\pm} \num[print-zero-exponent = true,print-exponent-implicit-plus=true]{6.5e-05} \\
 &  &  \\
\textsc{unscented'95 (recal)} & Prediction $(t|t-1)$ & \num[print-zero-exponent = true,print-implicit-plus=true,print-exponent-implicit-plus=true]{1.925837e-01} \ensuremath{\pm} \num[print-zero-exponent = true,print-exponent-implicit-plus=true]{4.7e-04} \\
 & Filtering $(t|t)$ & \num[print-zero-exponent = true,print-implicit-plus=true,print-exponent-implicit-plus=true]{1.435821e-01} \ensuremath{\pm} \num[print-zero-exponent = true,print-exponent-implicit-plus=true]{4.2e-04} \\
 & Smoothing $(t|T)$ & \num[print-zero-exponent = true,print-implicit-plus=true,print-exponent-implicit-plus=true]{3.384204e-02} \ensuremath{\pm} \num[print-zero-exponent = true,print-exponent-implicit-plus=true]{1.7e-04} \\
 &  &  \\
\textsc{unscented'02} & Prediction $(t|t-1)$ & \num[print-zero-exponent = true,print-implicit-plus=true,print-exponent-implicit-plus=true]{1.823639e-01} \ensuremath{\pm} \num[print-zero-exponent = true,print-exponent-implicit-plus=true]{5.3e-04} \\
 & Filtering $(t|t)$ & \num[print-zero-exponent = true,print-implicit-plus=true,print-exponent-implicit-plus=true]{1.371937e-01} \ensuremath{\pm} \num[print-zero-exponent = true,print-exponent-implicit-plus=true]{4.7e-04} \\
 & Smoothing $(t|T)$ & \num[print-zero-exponent = true,print-implicit-plus=true,print-exponent-implicit-plus=true]{3.178623e-02} \ensuremath{\pm} \num[print-zero-exponent = true,print-exponent-implicit-plus=true]{2.1e-04} \\
 &  &  \\
\textsc{unscented'02 (recal)} & Prediction $(t|t-1)$ & \num[print-zero-exponent = true,print-implicit-plus=true,print-exponent-implicit-plus=true]{3.317351e-01} \ensuremath{\pm} \num[print-zero-exponent = true,print-exponent-implicit-plus=true]{1.3e-03} \\
 & Filtering $(t|t)$ & \num[print-zero-exponent = true,print-implicit-plus=true,print-exponent-implicit-plus=true]{2.798133e-01} \ensuremath{\pm} \num[print-zero-exponent = true,print-exponent-implicit-plus=true]{1.4e-03} \\
 & Smoothing $(t|T)$ & \num[print-zero-exponent = true,print-implicit-plus=true,print-exponent-implicit-plus=true]{9.692433e-02} \ensuremath{\pm} \num[print-zero-exponent = true,print-exponent-implicit-plus=true]{9.6e-04} \\
\bottomrule
\end{tabular}
\end{table}

\begin{table}[htbp!]
\centering
\caption{Cross entropy for different methods and tasks in the LTI state estimation problem.}
\label{tab:results_cross_entropy}
\begin{tabular}{lll}
\toprule
Method & Task & Value $\pm$ Monte Carlo standard error \\
\midrule
\textsc{analytic} & Prediction $(t|t-1)$ & \num[print-zero-exponent = true,print-implicit-plus=true,print-exponent-implicit-plus=true]{-1.726823e+00} \ensuremath{\pm} \num[print-zero-exponent = true,print-exponent-implicit-plus=true]{7.9e-01} \\
 & Filtering $(t|t)$ & \num[print-zero-exponent = true,print-implicit-plus=true,print-exponent-implicit-plus=true]{-4.652060e-01} \ensuremath{\pm} \num[print-zero-exponent = true,print-exponent-implicit-plus=true]{2.3e+00} \\
 & Smoothing $(t|T)$ & \num[print-zero-exponent = true,print-implicit-plus=true,print-exponent-implicit-plus=true]{1.083637e-01} \ensuremath{\pm} \num[print-zero-exponent = true,print-exponent-implicit-plus=true]{4.0e+00} \\
 &  &  \\
\textsc{analytic (recal)} & Prediction $(t|t-1)$ & \num[print-zero-exponent = true,print-implicit-plus=true,print-exponent-implicit-plus=true]{-2.725533e+00} \ensuremath{\pm} \num[print-zero-exponent = true,print-exponent-implicit-plus=true]{9.6e-03} \\
 & Filtering $(t|t)$ & \num[print-zero-exponent = true,print-implicit-plus=true,print-exponent-implicit-plus=true]{-3.257829e+00} \ensuremath{\pm} \num[print-zero-exponent = true,print-exponent-implicit-plus=true]{8.5e-03} \\
 & Smoothing $(t|T)$ & \num[print-zero-exponent = true,print-implicit-plus=true,print-exponent-implicit-plus=true]{-4.640158e+00} \ensuremath{\pm} \num[print-zero-exponent = true,print-exponent-implicit-plus=true]{7.2e-03} \\
 &  &  \\
\textsc{linear} & Prediction $(t|t-1)$ & \num[print-zero-exponent = true,print-implicit-plus=true,print-exponent-implicit-plus=true]{2.362394e+04} \ensuremath{\pm} \num[print-zero-exponent = true,print-exponent-implicit-plus=true]{1.2e+03} \\
 & Filtering $(t|t)$ & \num[print-zero-exponent = true,print-implicit-plus=true,print-exponent-implicit-plus=true]{1.794416e+05} \ensuremath{\pm} \num[print-zero-exponent = true,print-exponent-implicit-plus=true]{1.1e+04} \\
 & Smoothing $(t|T)$ & \num[print-zero-exponent = true,print-implicit-plus=true,print-exponent-implicit-plus=true]{2.116272e+05} \ensuremath{\pm} \num[print-zero-exponent = true,print-exponent-implicit-plus=true]{1.2e+04} \\
 &  &  \\
\textsc{linear (recal)} & Prediction $(t|t-1)$ & \num[print-zero-exponent = true,print-implicit-plus=true,print-exponent-implicit-plus=true]{1.027865e+02} \ensuremath{\pm} \num[print-zero-exponent = true,print-exponent-implicit-plus=true]{1.1e+01} \\
 & Filtering $(t|t)$ & \num[print-zero-exponent = true,print-implicit-plus=true,print-exponent-implicit-plus=true]{3.256018e+02} \ensuremath{\pm} \num[print-zero-exponent = true,print-exponent-implicit-plus=true]{7.0e+01} \\
 & Smoothing $(t|T)$ & \num[print-zero-exponent = true,print-implicit-plus=true,print-exponent-implicit-plus=true]{4.992393e+02} \ensuremath{\pm} \num[print-zero-exponent = true,print-exponent-implicit-plus=true]{8.8e+01} \\
 &  &  \\
\textsc{unscented'95} & Prediction $(t|t-1)$ & \num[print-zero-exponent = true,print-implicit-plus=true,print-exponent-implicit-plus=true]{9.175953e+02} \ensuremath{\pm} \num[print-zero-exponent = true,print-exponent-implicit-plus=true]{1.9e+02} \\
 & Filtering $(t|t)$ & \num[print-zero-exponent = true,print-implicit-plus=true,print-exponent-implicit-plus=true]{3.885867e+03} \ensuremath{\pm} \num[print-zero-exponent = true,print-exponent-implicit-plus=true]{8.5e+02} \\
 & Smoothing $(t|T)$ & \num[print-zero-exponent = true,print-implicit-plus=true,print-exponent-implicit-plus=true]{5.232937e+03} \ensuremath{\pm} \num[print-zero-exponent = true,print-exponent-implicit-plus=true]{1.1e+03} \\
 &  &  \\
\textsc{unscented'95 (recal)} & Prediction $(t|t-1)$ & \num[print-zero-exponent = true,print-implicit-plus=true,print-exponent-implicit-plus=true]{-1.694058e+00} \ensuremath{\pm} \num[print-zero-exponent = true,print-exponent-implicit-plus=true]{2.1e-01} \\
 & Filtering $(t|t)$ & \num[print-zero-exponent = true,print-implicit-plus=true,print-exponent-implicit-plus=true]{-1.683487e+00} \ensuremath{\pm} \num[print-zero-exponent = true,print-exponent-implicit-plus=true]{3.6e-01} \\
 & Smoothing $(t|T)$ & \num[print-zero-exponent = true,print-implicit-plus=true,print-exponent-implicit-plus=true]{-2.069659e+00} \ensuremath{\pm} \num[print-zero-exponent = true,print-exponent-implicit-plus=true]{6.4e-01} \\
 &  &  \\
\textsc{unscented'02} & Prediction $(t|t-1)$ & \num[print-zero-exponent = true,print-implicit-plus=true,print-exponent-implicit-plus=true]{1.119020e+04} \ensuremath{\pm} \num[print-zero-exponent = true,print-exponent-implicit-plus=true]{6.4e+02} \\
 & Filtering $(t|t)$ & \num[print-zero-exponent = true,print-implicit-plus=true,print-exponent-implicit-plus=true]{6.514542e+04} \ensuremath{\pm} \num[print-zero-exponent = true,print-exponent-implicit-plus=true]{4.4e+03} \\
 & Smoothing $(t|T)$ & \num[print-zero-exponent = true,print-implicit-plus=true,print-exponent-implicit-plus=true]{8.183052e+04} \ensuremath{\pm} \num[print-zero-exponent = true,print-exponent-implicit-plus=true]{5.2e+03} \\
 &  &  \\
\textsc{unscented'02 (recal)} & Prediction $(t|t-1)$ & \num[print-zero-exponent = true,print-implicit-plus=true,print-exponent-implicit-plus=true]{9.706521e-02} \ensuremath{\pm} \num[print-zero-exponent = true,print-exponent-implicit-plus=true]{2.2e-01} \\
 & Filtering $(t|t)$ & \num[print-zero-exponent = true,print-implicit-plus=true,print-exponent-implicit-plus=true]{1.388761e+00} \ensuremath{\pm} \num[print-zero-exponent = true,print-exponent-implicit-plus=true]{7.4e-01} \\
 & Smoothing $(t|T)$ & \num[print-zero-exponent = true,print-implicit-plus=true,print-exponent-implicit-plus=true]{3.184046e+00} \ensuremath{\pm} \num[print-zero-exponent = true,print-exponent-implicit-plus=true]{1.0e+00} \\
\bottomrule
\end{tabular}
\end{table}

\begin{figure}[H]
\centering
\includegraphics[width=\linewidth]{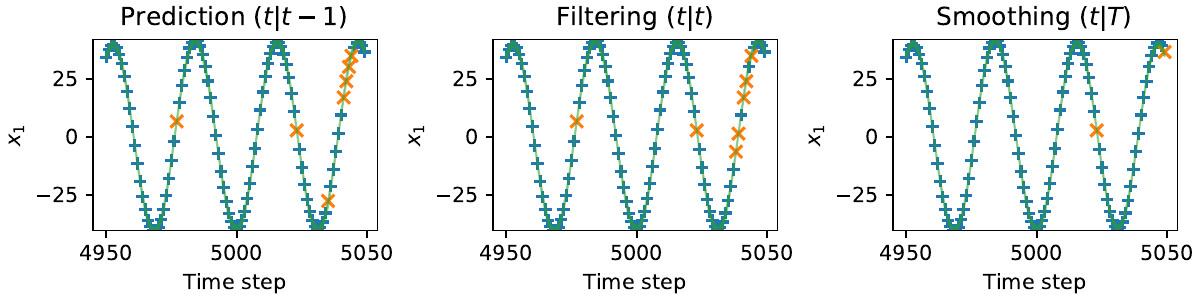}
\caption{\label{fig:trajectory-analytic-no} Trajectory excerpt for Kalman filter \textsc{analytic} in the LTI state estimation problem. Plus sign indicates hit; cross indicates miss. Expect 10 misses for nominal 90\% coverage.}
\end{figure}
\begin{figure}[H]
\centering
\includegraphics[width=\linewidth]{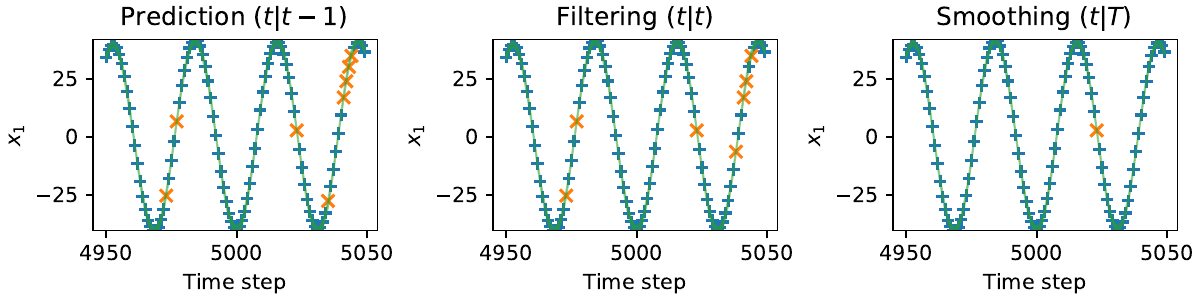}
\caption{\label{fig:trajectory-analytic-yes} Trajectory excerpt for Kalman filter \textsc{analytic (recal)} in the LTI state estimation problem. Plus sign indicates hit; cross indicates miss. Expect 10 misses for nominal 90\% coverage.}
\end{figure}
\begin{figure}[H]
\centering
\includegraphics[width=\linewidth]{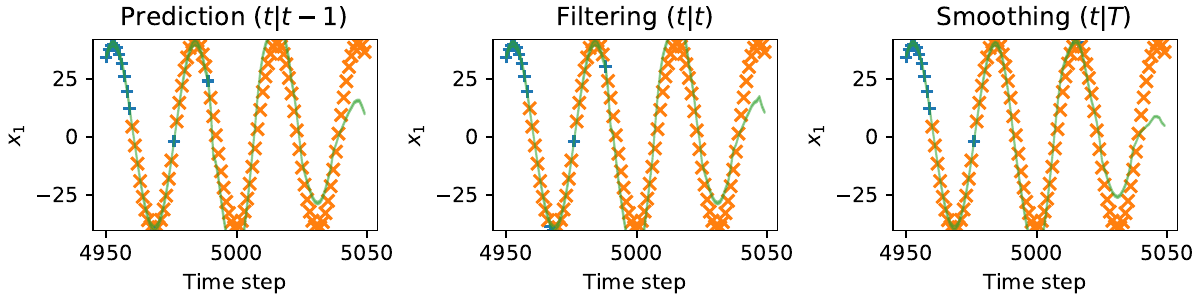}
\caption{\label{fig:trajectory-linear-no} Trajectory excerpt for Kalman filter \textsc{linear} in the LTI state estimation problem. Plus sign indicates hit; cross indicates miss. Expect 10 misses for nominal 90\% coverage.}
\end{figure}
\begin{figure}[H]
\centering
\includegraphics[width=\linewidth]{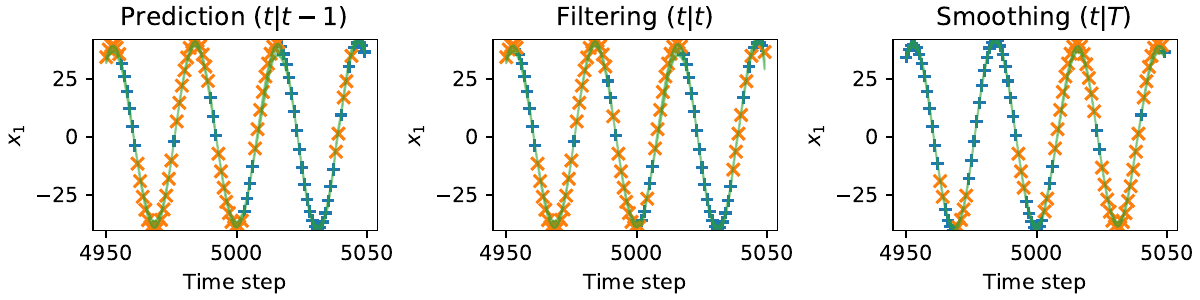}
\caption{\label{fig:trajectory-linear-yes} Trajectory excerpt for Kalman filter \textsc{linear (recal)} in the LTI state estimation problem. Plus sign indicates hit; cross indicates miss. Expect 10 misses for nominal 90\% coverage.}
\end{figure}
\begin{figure}[H]
\centering
\includegraphics[width=\linewidth]{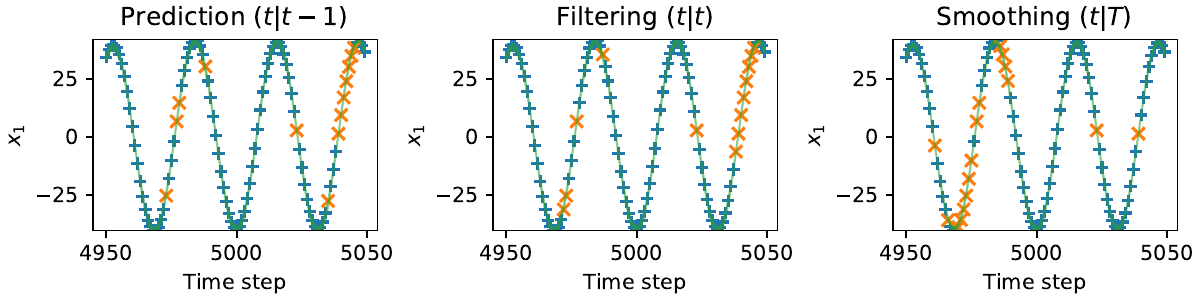}
\caption{\label{fig:trajectory-unscented0-no} Trajectory excerpt for Kalman filter \textsc{unscented'95} in the LTI state estimation problem. Plus sign indicates hit; cross indicates miss. Expect 10 misses for nominal 90\% coverage.}
\end{figure}
\begin{figure}[H]
\centering
\includegraphics[width=\linewidth]{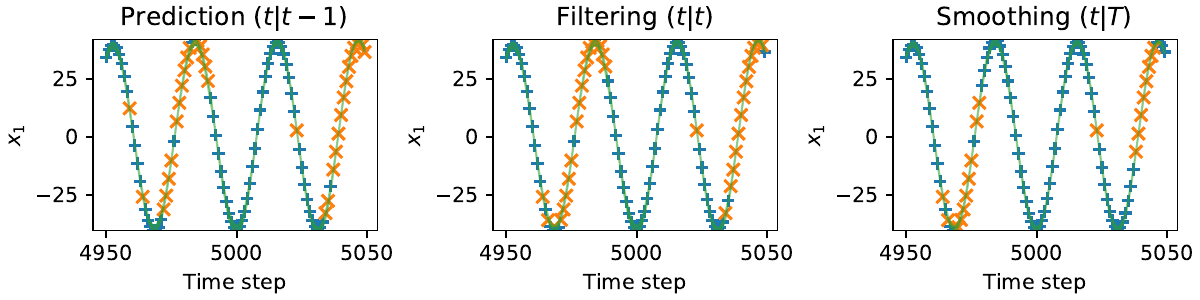}
\caption{\label{fig:trajectory-unscented0-yes} Trajectory excerpt for Kalman filter \textsc{unscented'95 (recal)} in the LTI state estimation problem. Plus sign indicates hit; cross indicates miss. Expect 10 misses for nominal 90\% coverage.}
\end{figure}
\begin{figure}[H]
\centering
\includegraphics[width=\linewidth]{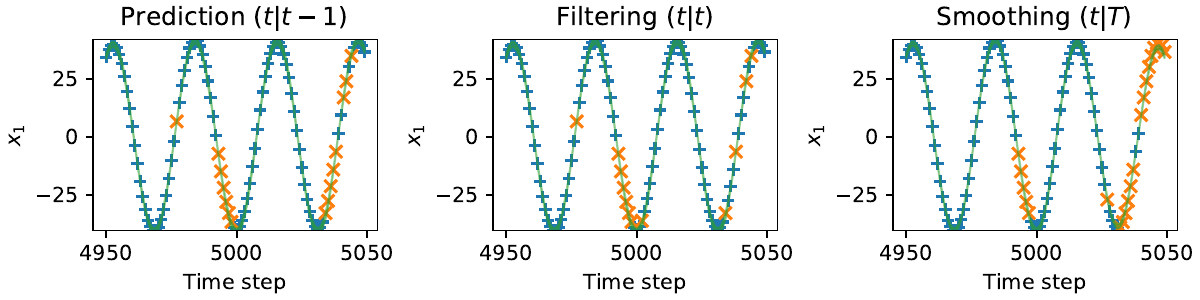}
\caption{\label{fig:trajectory-unscented1-no} Trajectory excerpt for Kalman filter \textsc{unscented'02} in the LTI state estimation problem. Plus sign indicates hit; cross indicates miss. Expect 10 misses for nominal 90\% coverage.}
\end{figure}
\begin{figure}[H]
\centering
\includegraphics[width=\linewidth]{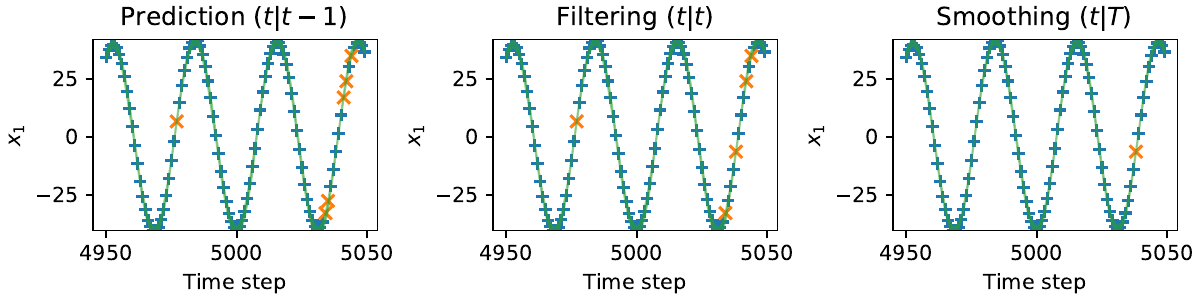}
\caption{\label{fig:trajectory-unscented1-yes} Trajectory excerpt for Kalman filter \textsc{unscented'02 (recal)} in the LTI state estimation problem. Plus sign indicates hit; cross indicates miss. Expect 10 misses for nominal 90\% coverage.}
\end{figure}

\subsection{Effect of recalibration}

\begin{figure}[H]
  \centering
  \includegraphics[width=\linewidth]{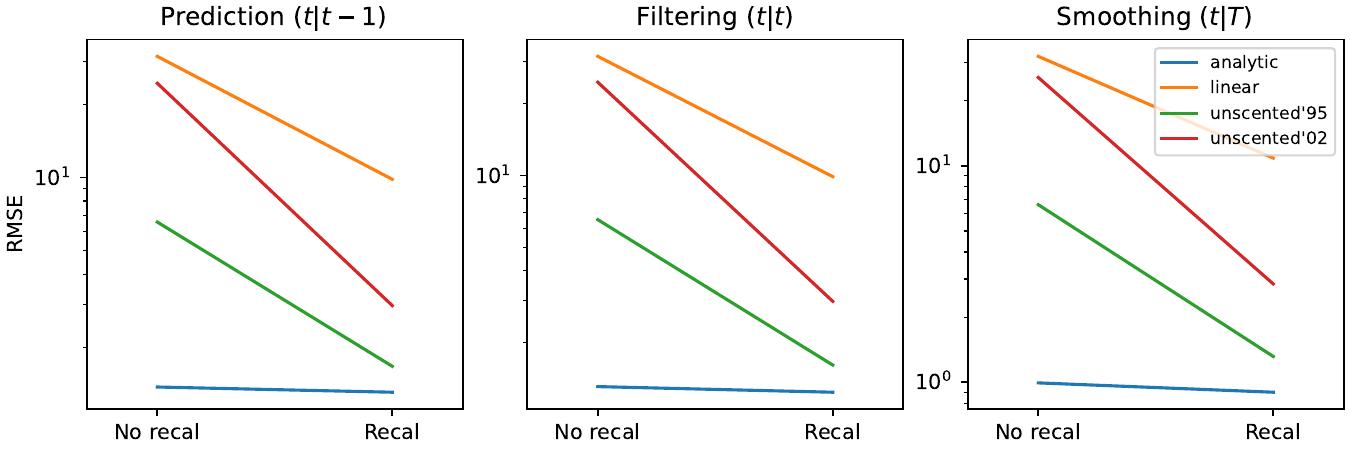}
  \caption{\label{fig:recal-effect-rmse}Effect of recalibration on RMSE for each method and inference task in the LTI state estimation problem. Each line segment connects a method's no-recal value to its recal value; vertical axis is log-scaled.}
\end{figure}

\begin{figure}[H]
  \centering
  \includegraphics[width=\linewidth]{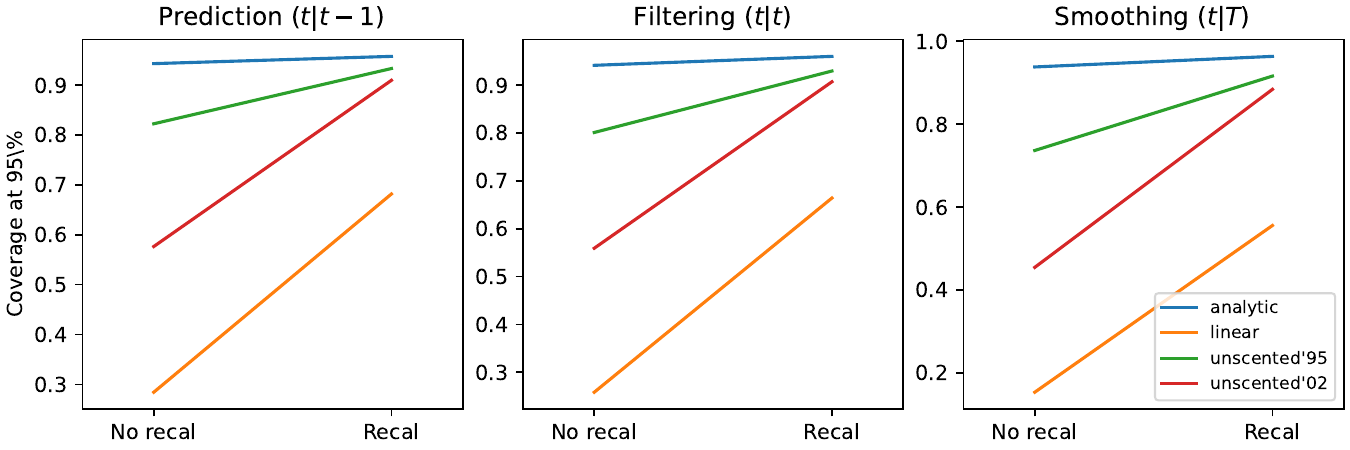}
  \caption{\label{fig:recal-effect-coverage95}Effect of recalibration on 95\% coverage.}
\end{figure}

\begin{figure}[H]
  \centering
  \includegraphics[width=\linewidth]{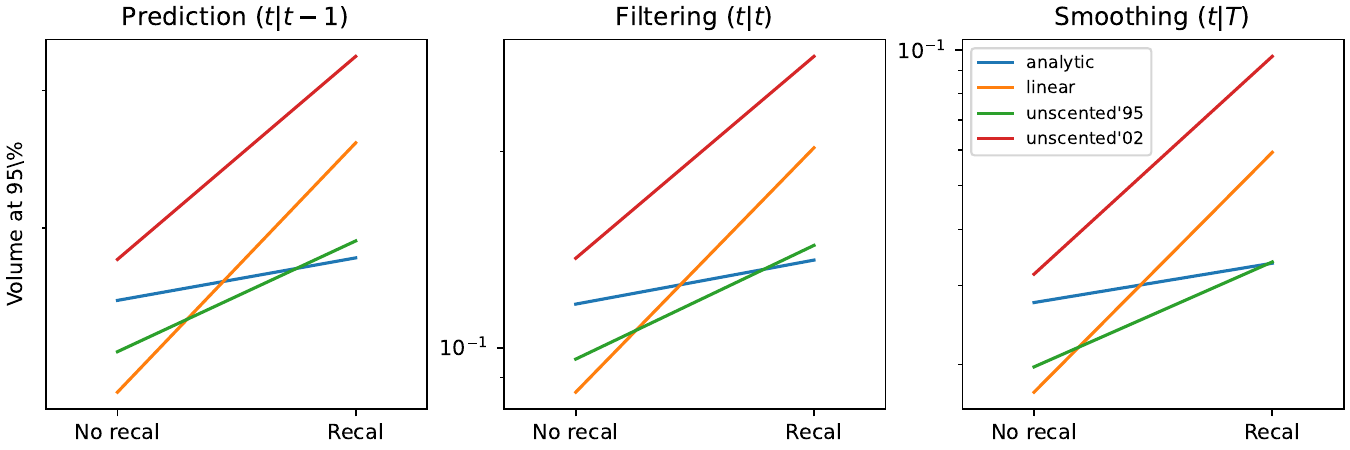}
  \caption{\label{fig:recal-effect-volume95}Effect of recalibration on 95\% confidence volume (log-scaled).}
\end{figure}

\begin{figure}[H]
  \centering
  \includegraphics[width=\linewidth]{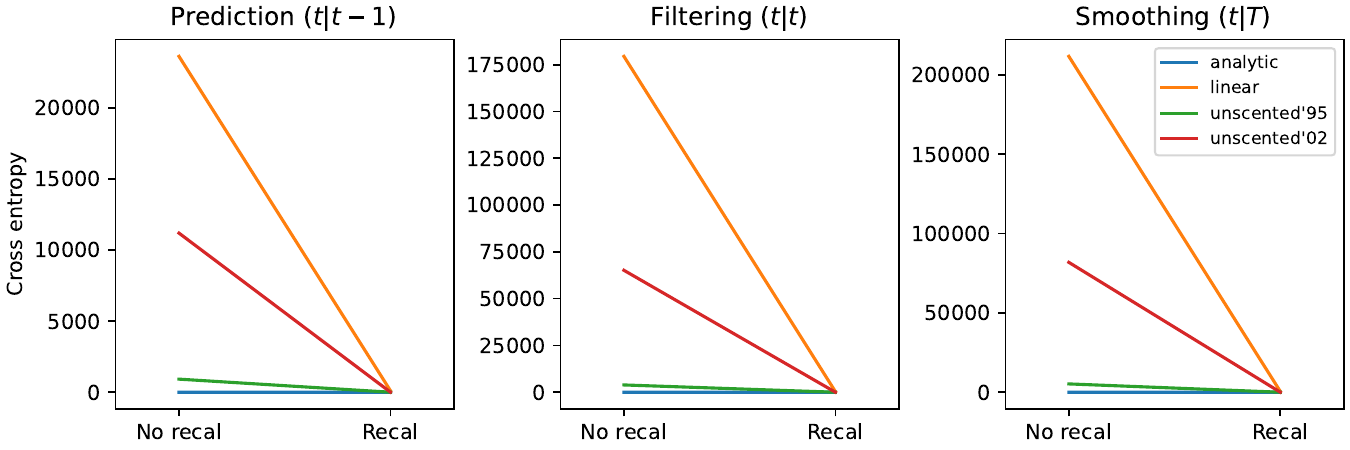}
  \caption{\label{fig:recal-effect-cross-entropy}Effect of recalibration on cross entropy.}
\end{figure}

\clearpage
\section{Full results for \S\ref{sec:lti-regulation}}
\label{app:lti-regulation-results}
\begin{table}[htbp!]
\centering
\caption{LQR performance of different methods and tasks in the LTI regulation problem.}
\label{tab:results_regulation}
\begin{tabular}{lll}
\toprule
Method & Task & Value $\pm$ Monte Carlo standard error \\
\midrule
\textsc{analytic} & State cost / LQR & \num[print-zero-exponent = true,print-implicit-plus=true,print-exponent-implicit-plus=true]{1.053475e+00} \ensuremath{\pm} \num[print-zero-exponent = true,print-exponent-implicit-plus=true]{3.2e-03} \\
 & Control cost / LQR & \num[print-zero-exponent = true,print-implicit-plus=true,print-exponent-implicit-plus=true]{9.999039e-01} \ensuremath{\pm} \num[print-zero-exponent = true,print-exponent-implicit-plus=true]{3.0e-03} \\
 & Total cost / LQR & \num[print-zero-exponent = true,print-implicit-plus=true,print-exponent-implicit-plus=true]{1.046893e+00} \ensuremath{\pm} \num[print-zero-exponent = true,print-exponent-implicit-plus=true]{3.2e-03} \\
 &  &  \\
\textsc{analytic (recal)} & State cost / LQR & \num[print-zero-exponent = true,print-implicit-plus=true,print-exponent-implicit-plus=true]{1.049540e+00} \ensuremath{\pm} \num[print-zero-exponent = true,print-exponent-implicit-plus=true]{3.4e-03} \\
 & Control cost / LQR & \num[print-zero-exponent = true,print-implicit-plus=true,print-exponent-implicit-plus=true]{1.027164e+00} \ensuremath{\pm} \num[print-zero-exponent = true,print-exponent-implicit-plus=true]{3.2e-03} \\
 & Total cost / LQR & \num[print-zero-exponent = true,print-implicit-plus=true,print-exponent-implicit-plus=true]{1.046791e+00} \ensuremath{\pm} \num[print-zero-exponent = true,print-exponent-implicit-plus=true]{3.3e-03} \\
 &  &  \\
\textsc{linear} & State cost / LQR & \num[print-zero-exponent = true,print-implicit-plus=true,print-exponent-implicit-plus=true]{1.914271e+06} \ensuremath{\pm} \num[print-zero-exponent = true,print-exponent-implicit-plus=true]{6.8e+05} \\
 & Control cost / LQR & \num[print-zero-exponent = true,print-implicit-plus=true,print-exponent-implicit-plus=true]{1.687060e+02} \ensuremath{\pm} \num[print-zero-exponent = true,print-exponent-implicit-plus=true]{4.3e+00} \\
 & Total cost / LQR & \num[print-zero-exponent = true,print-implicit-plus=true,print-exponent-implicit-plus=true]{1.679098e+06} \ensuremath{\pm} \num[print-zero-exponent = true,print-exponent-implicit-plus=true]{5.9e+05} \\
 &  &  \\
\textsc{linear (recal)} & State cost / LQR & \num[print-zero-exponent = true,print-implicit-plus=true,print-exponent-implicit-plus=true]{5.340074e+06} \ensuremath{\pm} \num[print-zero-exponent = true,print-exponent-implicit-plus=true]{4.7e+05} \\
 & Control cost / LQR & \num[print-zero-exponent = true,print-implicit-plus=true,print-exponent-implicit-plus=true]{3.149745e+00} \ensuremath{\pm} \num[print-zero-exponent = true,print-exponent-implicit-plus=true]{1.8e-01} \\
 & Total cost / LQR & \num[print-zero-exponent = true,print-implicit-plus=true,print-exponent-implicit-plus=true]{4.683974e+06} \ensuremath{\pm} \num[print-zero-exponent = true,print-exponent-implicit-plus=true]{4.2e+05} \\
 &  &  \\
\textsc{unscented'95} & State cost / LQR & \num[print-zero-exponent = true,print-implicit-plus=true,print-exponent-implicit-plus=true]{2.838426e+05} \ensuremath{\pm} \num[print-zero-exponent = true,print-exponent-implicit-plus=true]{1.4e+05} \\
 & Control cost / LQR & \num[print-zero-exponent = true,print-implicit-plus=true,print-exponent-implicit-plus=true]{1.689836e+02} \ensuremath{\pm} \num[print-zero-exponent = true,print-exponent-implicit-plus=true]{1.1e+01} \\
 & Total cost / LQR & \num[print-zero-exponent = true,print-implicit-plus=true,print-exponent-implicit-plus=true]{2.489894e+05} \ensuremath{\pm} \num[print-zero-exponent = true,print-exponent-implicit-plus=true]{1.3e+05} \\
 &  &  \\
\textsc{unscented'95 (recal)} & State cost / LQR & \num[print-zero-exponent = true,print-implicit-plus=true,print-exponent-implicit-plus=true]{9.005846e+00} \ensuremath{\pm} \num[print-zero-exponent = true,print-exponent-implicit-plus=true]{2.2e+00} \\
 & Control cost / LQR & \num[print-zero-exponent = true,print-implicit-plus=true,print-exponent-implicit-plus=true]{3.704188e+00} \ensuremath{\pm} \num[print-zero-exponent = true,print-exponent-implicit-plus=true]{1.9e-01} \\
 & Total cost / LQR & \num[print-zero-exponent = true,print-implicit-plus=true,print-exponent-implicit-plus=true]{8.354465e+00} \ensuremath{\pm} \num[print-zero-exponent = true,print-exponent-implicit-plus=true]{2.0e+00} \\
 &  &  \\
\textsc{unscented'02} & State cost / LQR & \num[print-zero-exponent = true,print-implicit-plus=true,print-exponent-implicit-plus=true]{5.775930e+05} \ensuremath{\pm} \num[print-zero-exponent = true,print-exponent-implicit-plus=true]{2.6e+05} \\
 & Control cost / LQR & \num[print-zero-exponent = true,print-implicit-plus=true,print-exponent-implicit-plus=true]{3.378205e+02} \ensuremath{\pm} \num[print-zero-exponent = true,print-exponent-implicit-plus=true]{1.2e+01} \\
 & Total cost / LQR & \num[print-zero-exponent = true,print-implicit-plus=true,print-exponent-implicit-plus=true]{5.066693e+05} \ensuremath{\pm} \num[print-zero-exponent = true,print-exponent-implicit-plus=true]{2.3e+05} \\
 &  &  \\
\textsc{unscented'02 (recal)} & State cost / LQR & \num[print-zero-exponent = true,print-implicit-plus=true,print-exponent-implicit-plus=true]{1.693187e+03} \ensuremath{\pm} \num[print-zero-exponent = true,print-exponent-implicit-plus=true]{3.0e+02} \\
 & Control cost / LQR & \num[print-zero-exponent = true,print-implicit-plus=true,print-exponent-implicit-plus=true]{7.595572e-03} \ensuremath{\pm} \num[print-zero-exponent = true,print-exponent-implicit-plus=true]{4.4e-03} \\
 & Total cost / LQR & \num[print-zero-exponent = true,print-implicit-plus=true,print-exponent-implicit-plus=true]{1.485157e+03} \ensuremath{\pm} \num[print-zero-exponent = true,print-exponent-implicit-plus=true]{2.7e+02} \\
\bottomrule
\end{tabular}
\end{table}

\begin{table}[htbp!]
\centering
\caption{RMSE of the Kalman filters in closed loop in the LTI regulation problem.}
\label{tab:results_regulation_rmse}
\begin{tabular}{lll}
\toprule
Method & Task & Value $\pm$ Monte Carlo standard error \\
\midrule
\textsc{analytic} & Filtering & \num[print-zero-exponent = true,print-implicit-plus=true,print-exponent-implicit-plus=true]{3.568368e-02} \ensuremath{\pm} \num[print-zero-exponent = true,print-exponent-implicit-plus=true]{8.3e-05} \\
 &  &  \\
\textsc{analytic (recal)} & Filtering & \num[print-zero-exponent = true,print-implicit-plus=true,print-exponent-implicit-plus=true]{3.590258e-02} \ensuremath{\pm} \num[print-zero-exponent = true,print-exponent-implicit-plus=true]{7.6e-05} \\
 &  &  \\
\textsc{linear} & Filtering & \num[print-zero-exponent = true,print-implicit-plus=true,print-exponent-implicit-plus=true]{1.780201e+02} \ensuremath{\pm} \num[print-zero-exponent = true,print-exponent-implicit-plus=true]{2.9e+01} \\
 &  &  \\
\textsc{linear (recal)} & Filtering & \num[print-zero-exponent = true,print-implicit-plus=true,print-exponent-implicit-plus=true]{3.583328e+02} \ensuremath{\pm} \num[print-zero-exponent = true,print-exponent-implicit-plus=true]{2.1e+01} \\
 &  &  \\
\textsc{unscented'95} & Filtering & \num[print-zero-exponent = true,print-implicit-plus=true,print-exponent-implicit-plus=true]{5.917882e+01} \ensuremath{\pm} \num[print-zero-exponent = true,print-exponent-implicit-plus=true]{1.4e+01} \\
 &  &  \\
\textsc{unscented'95 (recal)} & Filtering & \num[print-zero-exponent = true,print-implicit-plus=true,print-exponent-implicit-plus=true]{4.307320e-01} \ensuremath{\pm} \num[print-zero-exponent = true,print-exponent-implicit-plus=true]{4.0e-02} \\
 &  &  \\
\textsc{unscented'02} & Filtering & \num[print-zero-exponent = true,print-implicit-plus=true,print-exponent-implicit-plus=true]{8.869023e+01} \ensuremath{\pm} \num[print-zero-exponent = true,print-exponent-implicit-plus=true]{1.9e+01} \\
 &  &  \\
\textsc{unscented'02 (recal)} & Filtering & \num[print-zero-exponent = true,print-implicit-plus=true,print-exponent-implicit-plus=true]{6.091593e+00} \ensuremath{\pm} \num[print-zero-exponent = true,print-exponent-implicit-plus=true]{5.6e-01} \\
\bottomrule
\end{tabular}
\end{table}

\begin{table}[htbp!]
\centering
\caption{Coverage at 95\% of the Kalman filters in closed loop in the LTI regulation problem.}
\label{tab:results_regulation_coverage95}
\begin{tabular}{lll}
\toprule
Method & Task & Value $\pm$ Monte Carlo standard error \\
\midrule
\textsc{analytic} & Filtering & \num[print-zero-exponent = true,print-implicit-plus=true,print-exponent-implicit-plus=true]{9.123362e-01} \ensuremath{\pm} \num[print-zero-exponent = true,print-exponent-implicit-plus=true]{4.7e-04} \\
 &  &  \\
\textsc{analytic (recal)} & Filtering & \num[print-zero-exponent = true,print-implicit-plus=true,print-exponent-implicit-plus=true]{9.966397e-01} \ensuremath{\pm} \num[print-zero-exponent = true,print-exponent-implicit-plus=true]{1.4e-04} \\
 &  &  \\
\textsc{linear} & Filtering & \num[print-zero-exponent = true,print-implicit-plus=true,print-exponent-implicit-plus=true]{2.035204e-03} \ensuremath{\pm} \num[print-zero-exponent = true,print-exponent-implicit-plus=true]{2.9e-04} \\
 &  &  \\
\textsc{linear (recal)} & Filtering & \num[print-zero-exponent = true,print-implicit-plus=true,print-exponent-implicit-plus=true]{4.074407e-02} \ensuremath{\pm} \num[print-zero-exponent = true,print-exponent-implicit-plus=true]{9.2e-03} \\
 &  &  \\
\textsc{unscented'95} & Filtering & \num[print-zero-exponent = true,print-implicit-plus=true,print-exponent-implicit-plus=true]{1.837184e-02} \ensuremath{\pm} \num[print-zero-exponent = true,print-exponent-implicit-plus=true]{1.5e-03} \\
 &  &  \\
\textsc{unscented'95 (recal)} & Filtering & \num[print-zero-exponent = true,print-implicit-plus=true,print-exponent-implicit-plus=true]{8.528403e-01} \ensuremath{\pm} \num[print-zero-exponent = true,print-exponent-implicit-plus=true]{2.3e-03} \\
 &  &  \\
\textsc{unscented'02} & Filtering & \num[print-zero-exponent = true,print-implicit-plus=true,print-exponent-implicit-plus=true]{6.340634e-03} \ensuremath{\pm} \num[print-zero-exponent = true,print-exponent-implicit-plus=true]{9.3e-04} \\
 &  &  \\
\textsc{unscented'02 (recal)} & Filtering & \num[print-zero-exponent = true,print-implicit-plus=true,print-exponent-implicit-plus=true]{9.464046e-01} \ensuremath{\pm} \num[print-zero-exponent = true,print-exponent-implicit-plus=true]{9.7e-03} \\
\bottomrule
\end{tabular}
\end{table}

\begin{table}[htbp!]
\centering
\caption{Volume at 95\% of the Kalman filters in closed loop in the LTI regulation problem.}
\label{tab:results_regulation_volume95}
\begin{tabular}{lll}
\toprule
Method & Task & Value $\pm$ Monte Carlo standard error \\
\midrule
\textsc{analytic} & Filtering & \num[print-zero-exponent = true,print-implicit-plus=true,print-exponent-implicit-plus=true]{4.513682e-04} \ensuremath{\pm} \num[print-zero-exponent = true,print-exponent-implicit-plus=true]{4.6e-07} \\
 &  &  \\
\textsc{analytic (recal)} & Filtering & \num[print-zero-exponent = true,print-implicit-plus=true,print-exponent-implicit-plus=true]{2.381946e-03} \ensuremath{\pm} \num[print-zero-exponent = true,print-exponent-implicit-plus=true]{8.2e-06} \\
 &  &  \\
\textsc{linear} & Filtering & \num[print-zero-exponent = true,print-implicit-plus=true,print-exponent-implicit-plus=true]{2.647376e-03} \ensuremath{\pm} \num[print-zero-exponent = true,print-exponent-implicit-plus=true]{7.9e-05} \\
 &  &  \\
\textsc{linear (recal)} & Filtering & --- \ensuremath{\pm} --- \\
 &  &  \\
\textsc{unscented'95} & Filtering & \num[print-zero-exponent = true,print-implicit-plus=true,print-exponent-implicit-plus=true]{3.470272e-03} \ensuremath{\pm} \num[print-zero-exponent = true,print-exponent-implicit-plus=true]{2.6e-04} \\
 &  &  \\
\textsc{unscented'95 (recal)} & Filtering & \num[print-zero-exponent = true,print-implicit-plus=true,print-exponent-implicit-plus=true]{5.801317e-01} \ensuremath{\pm} \num[print-zero-exponent = true,print-exponent-implicit-plus=true]{8.4e-03} \\
 &  &  \\
\textsc{unscented'02} & Filtering & \num[print-zero-exponent = true,print-implicit-plus=true,print-exponent-implicit-plus=true]{1.100528e-02} \ensuremath{\pm} \num[print-zero-exponent = true,print-exponent-implicit-plus=true]{4.3e-04} \\
 &  &  \\
\textsc{unscented'02 (recal)} & Filtering & \num[print-zero-exponent = true,print-implicit-plus=true,print-exponent-implicit-plus=true]{6.157532e+02} \ensuremath{\pm} \num[print-zero-exponent = true,print-exponent-implicit-plus=true]{6.5e+00} \\
\bottomrule
\end{tabular}
\end{table}

\begin{table}[htbp!]
\centering
\caption{Cross entropy of the Kalman filters in closed loop in the LTI regulation problem.}
\label{tab:results_regulation_cross_entropy}
\begin{tabular}{lll}
\toprule
Method & Task & Value $\pm$ Monte Carlo standard error \\
\midrule
\textsc{analytic} & Filtering & \num[print-zero-exponent = true,print-implicit-plus=true,print-exponent-implicit-plus=true]{-8.217721e+00} \ensuremath{\pm} \num[print-zero-exponent = true,print-exponent-implicit-plus=true]{8.6e-03} \\
 &  &  \\
\textsc{analytic (recal)} & Filtering & \num[print-zero-exponent = true,print-implicit-plus=true,print-exponent-implicit-plus=true]{-7.989011e+00} \ensuremath{\pm} \num[print-zero-exponent = true,print-exponent-implicit-plus=true]{2.5e-03} \\
 &  &  \\
\textsc{linear} & Filtering & \num[print-zero-exponent = true,print-implicit-plus=true,print-exponent-implicit-plus=true]{2.249131e+09} \ensuremath{\pm} \num[print-zero-exponent = true,print-exponent-implicit-plus=true]{8.5e+08} \\
 &  &  \\
\textsc{linear (recal)} & Filtering & --- \ensuremath{\pm} --- \\
 &  &  \\
\textsc{unscented'95} & Filtering & \num[print-zero-exponent = true,print-implicit-plus=true,print-exponent-implicit-plus=true]{3.491803e+07} \ensuremath{\pm} \num[print-zero-exponent = true,print-exponent-implicit-plus=true]{1.7e+07} \\
 &  &  \\
\textsc{unscented'95 (recal)} & Filtering & \num[print-zero-exponent = true,print-implicit-plus=true,print-exponent-implicit-plus=true]{9.613007e-01} \ensuremath{\pm} \num[print-zero-exponent = true,print-exponent-implicit-plus=true]{1.0e+00} \\
 &  &  \\
\textsc{unscented'02} & Filtering & \num[print-zero-exponent = true,print-implicit-plus=true,print-exponent-implicit-plus=true]{1.806193e+08} \ensuremath{\pm} \num[print-zero-exponent = true,print-exponent-implicit-plus=true]{7.9e+07} \\
 &  &  \\
\textsc{unscented'02 (recal)} & Filtering & \num[print-zero-exponent = true,print-implicit-plus=true,print-exponent-implicit-plus=true]{2.727023e+01} \ensuremath{\pm} \num[print-zero-exponent = true,print-exponent-implicit-plus=true]{2.1e+01} \\
\bottomrule
\end{tabular}
\end{table}

\clearpage
\section{Runtime}
\label{app:runtime}
Tables~\ref{tab:runtime_lorenz} and~\ref{tab:runtime_lti} report per-call wall-clock time and XLA cost-analysis metrics for the Kalman \texttt{Predict}/\texttt{Update} and RTS \texttt{Predict}/\texttt{Update} primitives of Algorithms~\ref{alg:kalman-filter}--\ref{alg:kalman-filter-recal}, evaluated in the Lorenz (\S\ref{sec:stochastic-lorenz-system}) and LTI (\S\ref{sec:lti-estimation}) estimation problems.
The Kalman and RTS \texttt{Update} rows for which recalibration has no effect are reported once; vanilla Kalman \texttt{Update} is likewise method-agnostic since it reduces to conditioning on a jointly Gaussian distribution.
Benchmark inputs are drawn from random (non-data) distributions of the correct shape and dtype, since only the compiled graph governs runtime.
Wall clock is reported as mean $\pm$ standard error over 10 repetitions on CPU (\texttt{jax\_platform\_name=cpu}, \texttt{jax\_enable\_x64=True}), each repetition sized so that the JIT-compiled call is invoked enough times to run for at least 0.2\,s.
FLOPs, bytes accessed, and transcendental-op counts come from \texttt{jax.jit(fn).lower().compile().cost\_analysis()}.

All in all, the \textsc{analytic} operations are up to 10 times slower than conventional methods such as \(\textsc{unscented'95}\), albeit faster than the FLOPs or other low-level operation counts would suggest.
Though the \textsc{analytic} filter and smoother represent a performance-accuracy tradeoff, we reiterate from \citet{neural-kalman-anonymous} that they ultimately reside in the same complexity class as \textsc{linear} (EKF).

\begin{table}[htbp!]
\centering
\caption{\label{tab:runtime_lorenz}Per-call runtime of Kalman and RTS primitives in the Lorenz state estimation problem. Wall clock reports mean $\pm$ standard error over 10 repetitions on CPU.}
\begin{tabular}{lrrrr}
\toprule
Operation & Wall clock ($\mu$s) & FLOPs & Bytes accessed & Transcendentals \\
\midrule
\multicolumn{5}{l}{\textit{Kalman \texttt{Predict}}} \\
\quad \textsc{analytic} & 2240.36 $\pm$ 39.46 & 11\,046\,598 & 3\,254\,294 & 103\,680 \\
\quad \textsc{linear} & 258.85 $\pm$ 8.13 & 270\,093 & 625\,310 & 640 \\
\quad \textsc{unscented'95} & 367.83 $\pm$ 13.85 & 472\,859 & 398\,116 & 2\,240 \\
\quad \textsc{unscented'02} & 290.36 $\pm$ 12.34 & 472\,859 & 398\,116 & 2\,240 \\
\quad \textsc{mean field} & 1194.42 $\pm$ 43.57 & 11\,087\,679 & 3\,262\,598 & 103\,680 \\
\addlinespace
\multicolumn{5}{l}{\textit{Kalman \texttt{Update} (vanilla)}} \\
 & 19.39 $\pm$ 0.57 & 24 & 449 & 0 \\
\addlinespace
\midrule
\multicolumn{5}{l}{\textit{RTS \texttt{Predict}}} \\
\quad \textsc{analytic} & 2233.02 $\pm$ 22.62 & 12\,676\,065 & 3\,567\,919 & 113\,565 \\
\quad \textsc{linear} & 102.22 $\pm$ 2.02 & 519\,798 & 748\,751 & 670 \\
\quad \textsc{unscented'95} & 111.37 $\pm$ 4.31 & 520\,803 & 429\,842 & 2\,345 \\
\quad \textsc{unscented'02} & 87.90 $\pm$ 1.44 & 520\,803 & 429\,842 & 2\,345 \\
\quad \textsc{mean field} & 720.53 $\pm$ 19.78 & 12\,721\,100 & 3\,576\,591 & 113\,565 \\
\addlinespace
\multicolumn{5}{l}{\textit{RTS \texttt{Update}}} \\
 & 26.81 $\pm$ 1.08 & 175 & 1\,889 & 0 \\
\bottomrule
\end{tabular}
\end{table}

\begin{table}[htbp!]
\centering
\caption{\label{tab:runtime_lti}Per-call runtime of Kalman and RTS primitives in the LTI state-estimation problem. Wall clock reports mean $\pm$ standard error over 10 repetitions on CPU.}
\begin{tabular}{lrrrr}
\toprule
Operation & Wall clock ($\mu$s) & FLOPs & Bytes accessed & Transcendentals \\
\midrule
\multicolumn{5}{l}{\textit{Kalman \texttt{Predict}}} \\
\quad \textsc{analytic} & 384.50 $\pm$ 6.13 & 697\,938 & 1\,438\,494 & 63\,000 \\
\quad \textsc{linear} & 47.78 $\pm$ 2.29 & 44\,007 & 66\,046 & 112 \\
\quad \textsc{unscented'95} & 120.99 $\pm$ 3.60 & 112\,632 & 78\,692 & 1\,456 \\
\quad \textsc{unscented'02} & 112.43 $\pm$ 1.84 & 112\,632 & 78\,692 & 1\,456 \\
\addlinespace
\multicolumn{5}{l}{\textit{Kalman \texttt{Update} (vanilla)}} \\
 & 20.38 $\pm$ 0.32 & 238 & 2\,705 & 0 \\
\addlinespace
\multicolumn{5}{l}{\textit{Kalman \texttt{Update} (recalibrated)}} \\
\quad \textsc{analytic} & 398.84 $\pm$ 9.86 & 691\,056 & 1\,430\,408 & 62\,888 \\
\quad \textsc{linear} & 42.93 $\pm$ 1.77 & 36\,796 & 57\,144 & 56 \\
\quad \textsc{unscented'95} & 90.63 $\pm$ 2.87 & 109\,929 & 73\,071 & 1\,456 \\
\quad \textsc{unscented'02} & 93.81 $\pm$ 5.76 & 109\,929 & 73\,071 & 1\,456 \\
\midrule
\multicolumn{5}{l}{\textit{RTS \texttt{Predict}}} \\
\quad \textsc{analytic} & 29.60 $\pm$ 0.63 & 8\,371 & 16\,759 & 0 \\
\quad \textsc{linear} & 36.44 $\pm$ 0.57 & 10\,197 & 19\,927 & 0 \\
\quad \textsc{unscented'95} & 52.87 $\pm$ 0.65 & 14\,454 & 23\,366 & 0 \\
\quad \textsc{unscented'02} & 53.66 $\pm$ 1.32 & 14\,454 & 23\,366 & 0 \\
\addlinespace
\multicolumn{5}{l}{\textit{RTS \texttt{Update}}} \\
 & 26.06 $\pm$ 0.60 & 683 & 4\,961 & 0 \\
\bottomrule
\end{tabular}
\end{table}

\end{document}